\documentclass[aps,prd,twocolumn,showpacs,preprintnumbers,amsmath,superscriptaddress,amssymb,floats,nofootinbib]{revtex4-1}
\setlength{\topmargin}{-1.cm}

\usepackage{graphicx,color,float}
\usepackage{amsthm}


\begin{document}
\newcommand {\T} {{\cal T}}
\newcommand {\DC} {{\delta \chi^2}}
\newcommand {\A} {{A}}
\newcommand {\al} {c} 
\newcommand {\bl} {b}
\newcommand {\erf} {\text{erf}}
\newcommand {\ttheta} {\theta} 
\newcommand {\x}  {{\theta_{\text{true}}}}
\newcommand {\y}  {{\theta_{\min}}}
\newcommand {\Tm} {{\Theta_{\min}}}
\newcommand {\Tt} {{\Theta_{\text{true}}}}
\newcommand {\Eta} {{\cal H}}

\newcommand {\hmu} {\hat{\mu}}
\newcommand {\hnu} {\hat{\nu}}
\newcommand {\hpi} {\hat{\pi}}
\newcommand {\htau} {\hat{\tau}}
\newcommand {\hth} {\hat{\theta}} 
\newcommand {\dg} {\text{diag}}
\newcommand {\p} {p}
\newcommand {\s} {s}
\newcommand {\ba} {{\bf a}}
\newcommand {\bb} {{\bf b}}
\newcommand {\bc} {{\bf c}}
\newcommand {\0} {{\bf 0}}
\newcommand {\f} {{\bf f}}
\newcommand {\e} {{\bf e}}
\newcommand {\B} {{B^*}}
\newcommand {\itau} {M^{-1}}
\newcommand {\mtau} {{M}}

\newcommand {\rf} {\text{ref}}
\newcommand {\tr} {\text{true}}
\newcommand {\NH} {{H_0}}
\newcommand {\IH} {{H_1}}

\newcommand {\Prob} {\text{Prob}}
\newcommand {\E} {\text{E}}
\newcommand {\Var} {\text{Var}}
\newcommand{\ind}{\stackrel{\text{indep}}{\sim}}
\newcommand{\cas}{\buildrel \text{a.s.} \over \longrightarrow}
\newcommand{\casleft}{\buildrel \text{a.s.} \over \longleftarrow}
\newcommand{\cd}{\buildrel d \over \rightarrow}
\newcommand{\cp}{\buildrel P \over \longrightarrow}
\newcommand{\apsim}{\buildrel \text{approx.} \over \sim}

\newtheorem{lemma}{Lemma}

\newcommand {\Tbarn} {\overline{\Delta T_{H_0}}}
\newcommand {\Tbara} {\overline{\Delta T_{H_1}}}
\newcommand {\n} {{n}}
\newcommand {\Ni} {{N_i}}
\newcommand {\nutn} {\eta} 
\newcommand {\nuta} {\zeta} 
\newcommand {\nutnT} {\eta^0} 
\newcommand {\nutaT} {\zeta^0} 
\newcommand {\muT} {\mu^0} 
\newcommand {\nuT} {\nu^0} 
\newcommand {\piT} {{\pi^{0}}} 
\newcommand {\tauT} {\tau^0} 
\newcommand {\tmu} {\tilde{\mu}}
\newcommand {\tnu} {\tilde{\nu}}

\newcommand {\hnutn} {\hat{\nutn}}
\newcommand {\hnuta} {\hat{\nuta}}

\newcommand {\paratheta} {\beta} 
\newcommand {\PT} {B}

\newcommand {\D} {\overline{D}}

\def\Journal#1#2#3#4{{#1} {\bf #2}, #3 (#4)}
\def\NCA{\rm Nuovo Cimento}
\def\NPA{{\rm Nucl. Phys.} A}
\def\NIM{\rm Nucl. Instrum. Methods}
\def\NIMA{{\rm Nucl. Instrum. Methods} A}
\def\NPB{{\rm Nucl. Phys.} B}
\def\PLB{{\rm Phys. Lett.}  B}
\def\PRL{\rm Phys. Rev. Lett.}
\def\PRD{{\rm Phys. Rev.} D}
\def\PRC{{\rm Phys. Rev.} C}
\def\ZPC{{\rm Z. Phys.} C}
\def\JPG{{\rm J. Phys.} G}


\title{The Gaussian CL$_s$ Method for Searches of New Physics}

\date{\today}
\author{X. Qian}\email[Corresponding author: ]{xqian@bnl.gov}
\affiliation{Brookhaven National Laboratory, Upton, NY}
\author{A. Tan}\email[Corresponding author: ]{aixin-tan@uiowa.edu}
\affiliation{Department of Statistics and Actuarial Science, University of Iowa, Iowa City, IA}
\author{J. J. Ling}
\affiliation{Department of Physics, University of Illinois of Urbana-Champaign, Urbana, IL}
\author{Y. Nakajima}
\affiliation{Lawrence Berkeley National Laboratory, Berkeley, CA}
\author{C. Zhang}
\affiliation{Brookhaven National Laboratory, Upton, NY}



\begin{abstract} 
We describe a method based on the CL$_s$ approach to present results in searches 
of new physics, under the condition that the relevant parameter space is continuous. 
Our method relies on a class of test statistics developed for non-nested hypotheses testing problems, denoted by $\Delta T$, which has a Gaussian approximation to its parent distribution when the sample size is large. This leads to a simple procedure of forming exclusion sets for 
the parameters of interest, which we call the Gaussian CL$_s$ method. 
Our work provides a self-contained mathematical proof for the Gaussian CL$_s$ method, that explicitly outlines the required conditions. These conditions are milder 
than that required by the Wilks' theorem to set confidence intervals (CIs).
We illustrate the Gaussian CL$_s$ method in an example of searching for a sterile neutrino, where the CL$_s$ approach was rarely used before. We also compare data analysis results produced by the Gaussian CL$_s$ method and various CI methods to showcase their differences.

 \end{abstract}

\maketitle
\thispagestyle{plain}
\section{Introduction}\label{sec:introduction}

The Standard Model of particle physics has been extremely successful since 
its establishment in the mid-1970s. In particular, the Higgs particle 
discovered at LHC in 2012~\cite{ATLAS_higgs,CMS_higgs} completed the 
list of fundamental particles predicted by the minimal Standard Model. 
On the other hand, there has been experimental evidence that point to new physics beyond the Standard Model: 
neutrino oscillations indicate non-zero neutrino mass; various gravitational effects indicate 
the existence of non-baryonic dark matter; the accelerating expansion 
of our universe indicates the existence of dark energy; the large 
observed matter-anti-matter asymmetry in the universe indicates the existence of 
additional CP violation source beyond that in the quark mixing matrix, etc. 
Searches for new physics beyond the Standard Model have been and still are 
at the frontier of high energy particle physics. 

Given experiment data, a problem of searching for new physics often turns into a parameter estimation problem, and the findings are presented in the form of constraints on some continuous parameter(s). One example is the search for sterile neutrino suggested by LSND~\cite{LSND}, MiniBooNE~\cite{miniboone}, and reactor antineutrino anomalies~\cite{anom}.~\footnote{Other examples include
  dark matter searches (the interaction cross section vs. the mass of
  the dark matter particle) and SUSY (super symmetry) particle
  searches at LHC (the interaction coupling vs. the mass scale).} In this case, data collected from an experiment consists of neutrino interaction counts in multiple energy bins, $x=(N_1,\cdots,N_n)$. Data analysis results are generally shown as constraints in the two-dimensional parameter space of $(\sin^22\theta, |\Delta m^2|)$, where $\theta$ is the mixing angle involving the sterile neutrino, and $|\Delta m^2|$ is the mass-squared difference of neutrino mass eigenstate beyond three generations. 



One way to set constraints is to form confidence intervals\footnote{Rigorously speaking, the word ``confidence set" should be used 
instead of ``confidence interval" when the dimension of the parameter space is higher 
than one. But as long as there is no ambiguity, we will refer to all confidence sets 
as confidence intervals for simplicity.} (CI), which contains parameter values that are compatible with the data. Let $\paratheta$ denote the parameter(s), such as $\paratheta=(\sin^22\theta, |\Delta m^2|)$ in the neutrino oscillation problem. A CI can be obtained by inverting a testing procedure. Specifically, the set of all $\paratheta_1$ such
that the hypothesis $H_0: \paratheta=\paratheta_1$ is not rejected at
level $1-c$, forms a level-$c$ CI. A testing procedure is often performed by thresholding a test statistic,
which is a user-chosen function that, for any given $\paratheta_1$,
defines a criterion to order all possible values of $x$. To test $H_0:
\paratheta=\paratheta_1$,
a commonly used type of test statistic takes the form
  \begin{equation}\label{eq:nestchi2}\Delta \chi^2
  (\paratheta_1;x):= \chi^2 (\paratheta_1;x) -
\chi^2_{\min}(x),\end{equation}
 where $\chi^2$ is a function that measures the compatibility between
$\paratheta_1$ and $x$. One important example of $\chi^2$ is the negative-two-log-likelihood 
function, and the corresponding $\Delta \chi^2$ is called the likelihood ratio (LR) test statistic. 
In the field of high energy physics, the unified approach to construct CIs advocated by 
Feldman and Cousins~\cite{Feldman:1997qc} is indeed based on the likelihood ratio test statistic. 

A parameter value $\paratheta_1$ is included in a level-$\al$ CI if $\Delta \chi^2 (\paratheta_1; x)$ is below a threshold $t_\al$, such that $\Prob_{\paratheta_1}(\Delta \chi^2 (\paratheta_1; X) \leq t_\al)\geq \al$. Here, the subscript $\paratheta_1$ means that $X$ is a random outcome from a model 
with true parameter value $\paratheta_1$.  In general, Monte Carlo (MC) simulation can be used~\footnote{When the model contains nuisance 
parameters, extra care are needed in performing Monte Carlo simulation. See Ref.~\cite{full_freq} for example.} 
to approximate the parent distribution of $\Delta \chi^2$. We refer to the corresponding method 
of constructing CIs as {\it the MC CI method}.  An example of the MC CI method, tailored for the 
LR test statistic, can be found in section~V.B of Feldman and Cousins~\cite{Feldman:1997qc}. 
The MC CI method is often computationally intensive. Alternatively, $t_\al$ can be approximated using a Chi-square distribution, a summary of its usage in particle physics is provided by the Particle Data Group~\cite{PDG}. This method is simple to carry out, but the approximation is only valid under relatively stringent conditions. Specifically, the Chi-square thresholds are justified by the Wilks' theorem~\cite{Wilks-1938} 
for the LR test statistic under regularity conditions C1---C3 in 
Sec.~\ref{sec:review}, and they are 
justified for the variations of $\Delta \chi^2$ listed in Sec.~\ref{sec:CLs_1} under similar conditions~\cite{wald:1943, silvey:1959}. We conveniently refer to any method that constructs approximate CIs based on Chi-square thresholds as 
a {\it Wilks' CI method}. 

In theory, forming CIs using test statistics of the form $\Delta \chi^2 (\paratheta; X)$ 
is desirable, 
because it leads to a unified approach in setting limits in the absence of new 
physics signals and in estimating parameters after the discovery of new physics~\cite{Feldman:1997qc}. 
However, in the problem of searching for sterile neutrinos, the computationally 
expensive MC CI method is usually necessary to obtain valid thresholds $t_c$ for the 
$\Delta \chi^2$ statistic, making the application difficult.

Compared to $\Delta \chi^2$, the following test statistic, $\Delta T$, has a 
parent distribution that is easy to approximate under mild conditions. 
By fixing a reference value of $\paratheta$, 
say $\paratheta_\rf$,  one can test a pair of non-nested hypotheses 
$H_0: \paratheta=\paratheta_\rf$ versus $H_1: \paratheta = \paratheta_1$ using a test 
statistic of the form
\begin{equation}\label{eq:nonnestT}
\Delta T(\paratheta_\rf,\paratheta_1;x):= \chi^2 (\paratheta_1;x) - \chi^2 (\paratheta_\rf;x) \,.
\end{equation}
An example of $\Delta T$ is the negative-two-log-likelihood ratio statistic for $H_0$ and $H_1$.
Given observed data $x$ and a fixed $\paratheta_\rf$, all $\beta_1$ values that result in $\Prob_{\paratheta_1}(\Delta T(\paratheta_\rf, \paratheta_1; X) \leq t'_\al)\geq \al$ constitute a  level-$\al$ CI. It is proven in Sec.~\ref{sec:Gaussian} that under fairly mild conditions, 
one can approximate $t'_\al$ using quantiles of a Gaussian distribution. Specifically, we show that when the data size is large, 
the distribution of $\Delta T(X)$, where $X$ represents potential data from a model that 
satisfies either one of the two hypotheses, say $H$, can be approximated by the Gaussian 
distribution with mean $\overline{\Delta T_H}$ and standard deviation 
$2\sqrt{|\overline{\Delta T_H}|}$. Here, $\overline{\Delta T_H}$ is defined to be 
$\Delta T(x_H^{\text{Asimov}})$ as in Eq.~\eqref{eq:deltaT}, where $x_H^{\text{Asimov}}$ is the 
{\it Asimov data set}~\cite{asimov} as introduced in Sec.~\ref{sec:DeltaT}. 


However, CIs constructed from $\Delta T$ can exclude $\paratheta_1$ values that are not much less compatible with the data than $\paratheta_\rf$ is, 
which we demonstrate in Sec.~\ref{sec:CLvsCI}. To avoid counter-intuitive results based on $\Delta T$, we take the CL$_s$ approach of setting {\it exclusion sets}~\cite{Read:2000ru,Junk:1999kv,Read:2002hq} as an alternative to the CI approach. We refer to the simple procedure of setting exclusion sets based on the $\Delta T$ statistic using a Gaussian approximation as {\it the Gaussian CL$_s$ method}. 

Note that an exclusion set imposes a different kind of constraint than that of (the complement of) a CI. An exclusion set aims at identifying parameter values that fit the data much worse than the reference model. Consequently, the CL$_s$ approach is more reluctant than the CI approach to exclude models where the experiment has little sensitivity. An example comparing the two can be found in Sec.~\ref{sec:CLvsCI}. 

The main contribution of this paper is to provide a mathematical proof for a Gaussian 
approximation to the distribution of $\Delta T$.  This result justifies the Gaussian CL$_s$ method, which requires a computational load similar to that of the Wilks' CI method, and the former is valid in situations where the latter is not. Results similar to ours can be found in Ref.~\citep{asimov} in the context of searching new particles, and in Ref.~\cite{Qian:2012zn,Blennow:2013oma} in the context of neutrino mass hierarchy determinations. The self-contained proof provided in this paper makes it easier 
to fully articulate the required conditions, which were missing in the previous work.  
Also, we make a more general and realistic assumption in accordance with the physics 
problem of interest than that of Ref.~\citep{asimov} and the paper by Wald~\citep{wald:1943} cited therein. For details, see assumptions [A0] and [A1] in Sec.~\ref{sec:approxT}.

Another contribution of this paper is that we compare various methods 
that take the CI approach or the CL$_s$ approach in a problem of searching for 
neutrino oscillations, where the CL$_s$ approach was rarely used before. 
Based on the comparisons, we advocate the Gaussian CL$_s$ as an attractive alternative 
method to the CI approach in the 
application of searching for new physics through precision measurements. 
First, the Gaussian CL$_s$ 
method is inexpensive to carry out and is valid in very general setups. Secondly, researchers often need to combine results from different experiments. When conditions in the Wilks' theorem are not satisfied, it is simple to combine the test statistics from different experiments and form an overall CI using the Wilks' method.   
Otherwise, expensive MC methods have to be used to form CIs for each experiment, and there is no rigorous way to combine these results together other than to rerun a more expensive MC for the combined 
data. 
In contrast, we explain in Sec.~\ref{sec:discussion} that experimental results can be easily combined using the Gaussian CL$_s$ method, and is valid under mild conditions.

This paper is organized as follows. In Sec.~\ref{sec:review}, 
we briefly review the CI approach that utilizes a class of statistics, $\Delta \chi^2$. We look at both the Wilks' CI method and the MC CI method, and discuss their advantages and limitations. 
In Sec.~\ref{sec:DeltaT}, we describe an alternative class of statistics, $\Delta T$. 
In Sec.~\ref{sec:CLs}, we describe the CL$_s$ approach based on the $\Delta T$ statistic, 
and outline a simple procedure to carry it out using the Gaussian approximation.
In Sec.~\ref{sec:example}, using an example of the search for a sterile neutrino, we  
check the validity of the approximation in the Gaussian CL$_s$ method, and compare 
different methods of forming constraints in the parameter space. Finally, we present discussions and 
summaries in Sec.~\ref{sec:discussion} and Sec.~\ref{sec:summary}, respectively.

\section{The Confidence Interval approach based on the $\Delta \chi^2$ statistic}~\label{sec:review}

In this section, we briefly review the traditional method of setting 
CIs in the context of neutrino oscillations. We consider 
a neutrino energy spectrum that consists of $n$ energy bins, and assume that 
the mean number of counts in each bin is a function of the vector of 
parameters of main interest, $\paratheta=(\sin^22\theta,|\Delta m^2|)$, 
and a vector of nuisance parameters (such as the overall normalization), 
$\eta$. Let $\Theta$, $M$, and $\Eta$ denote the parameter space of 
$\sin^22\theta$, $|\Delta m^2|$, and $\eta$, respectively. 
There are two further physical constraints: $\sin^22\theta \ge 0$ and 
$|\Delta m^2| \ge 0$. Then for the $i$-th bin, 
$\lambda_{i}(\sin^22\theta, |\Delta m^2|, \eta)$ and $N_i$ represent the 
mean and the observed counts of neutrino induced interactions, 
respectively.  When $\lambda_{i}$ is large enough, the distribution of 
$N_i$ can be well approximated by a Gaussian distribution with mean 
$\lambda_{i}$ and standard deviation $\sqrt{\lambda_{i}}$.

Given any specific guess of the value of the parameters 
$(\sin^22\theta, |\Delta m^2|, \eta)$, once the data $x=\{N_i, i=1,\ldots,n\}$ 
are observed, one can calculate the deviation of the data from 
the mean values $\lambda_i$ to measure the compatibility
 of the hypothesized parameter values to $x$. 
Commonly used deviations include negative-two-log-likelihood ratio, Pearson chi-square and Neyman chi-square. Further, when certain knowledge 
concerning the nuisance parameter $\eta$ 
(e.g. knowledge of detecting efficiency and neutrino flux) 
is available, it can be reflected in the definition of the deviation. 
For example, to modify the Pearson Chi-square, denoted by 
$\chi^2_{\text{P}}$, when previous experiments suggest an estimate of $\eta$ to 
be $\eta_0$ with standard deviation $\sigma_\eta$,  
one can define the following deviation function:
\begin{eqnarray}\label{eq:chi2def}
& & \chi^2 (\sin^22\theta, |\Delta m^2|, \eta; x) \nonumber \\
&=& 
\chi^2_{\text{P}}(\sin^22\theta, |\Delta m^2|, \eta; x)  + \chi^2_p(\eta) \nonumber \\
&=& \sum_i \frac{(N_i-\lambda_{i}(\sin^22\theta, |\Delta m^2|, \eta))^2}{\lambda_{i}(\sin^22\theta, |\Delta m^2|, \eta)}
+ \frac{(\eta-\eta_0)^2}{(\sigma_\eta)^2}\,.  
\end{eqnarray}
Below, we use the notation $\arg\min_{w} h(w)$ to denote the value of $w$ that 
minimizes any given function $h$, and the standard set-builder notation 
$\{h(w): \text{restriction $w$}\}$ to denote a set that is made up of all the 
points $h(w)$ such that $w$ satisfies the restriction to the right of the 
colon. Let 
\begin{equation}
(\sin^22 \theta_{\min}, |\Delta m^2_{\min}|,\eta_{\min}) = 
\arg\min \chi^2(\sin^22\theta, |\Delta m^2|,\eta; x),
\end{equation} 
that is, the value of $\left({\sin^22\theta, |\Delta m^2|, \eta}\right)\in\Theta \times M \times \Eta$ that best fits the data according to the 
deviation $\chi^2$. 
Also, let 
\begin{equation}
\chi^2_{\min}(x)=\chi^2(\sin^22\theta_{\min},|\Delta m^2_{\min}|,\eta_{\min}; x).
\end{equation} 
And for any given 
$(\sin^22\theta,|\Delta m^2|)$, let 
\begin{equation}
\eta_{\min}(\sin^22\theta,|\Delta m^2|)=\arg \min_{\eta} \chi^2(\sin^22\theta,|\Delta m^2|,\eta; x).
\end{equation}  
Then we can define a test statistic that reflects how much worse $(\sin^22\theta,|\Delta m^2|)$ is than that of the best fit, namely,
\begin{eqnarray}\label{eq:chisqmin}
&\Delta\chi^2 (\sin^22\theta,|\Delta m^2|; x) \equiv&  \nonumber \\
&\chi^2(\sin^22\theta,|\Delta m^2|,\eta_{\min}(\sin^22\theta,|\Delta m^2|); x)-\chi^2_{\min}(x)\,&.
\end{eqnarray}
The corresponding CI with confidence level $\al$ is 
defined to be
\begin{widetext}
\begin{equation}\label{eq:CI}
CI_\al=\{(\sin^22\theta,|\Delta m^2|) 
\in \Theta \times M: \Delta\chi^2(\sin^22\theta,|\Delta m^2|; x) \leq t_\al \}\,.
\end{equation}
\end{widetext}
The term $t_\al$ represents the threshold value such that,  
$\min_\eta \Prob_{\sin^22\theta,|\Delta m^2|,\eta}\left(\Delta \chi^2 \leq t_\al\right) \geq \al$.  
The key in constructing a CI is to specify $t_\al$ correctly for a given 
confidence level $\al$.
  
   Most commonly examined confidence levels use $\al= 68.3\%$
  $(1\sigma)$, $95.5\%$ $(2\sigma)$, $99.7\%$ $(3\sigma)$, which are
  often linked to threshold values $t_\al=$ 2.31, 5.99, 11.8, respectively~\cite{PDG}. 
  Note that these three values are the $68.3\%$, $95.5\%$ and $99.7\%$ 
  quantiles of the Chi-square distribution with two degrees of freedom, 
  respectively. The reason why these threshold values are used is that,
  $CI_\al$ is indeed constructed upon screening the entire parameter
  space by inspecting one point at a time, denoted by $(\sin^22\theta_1,|\Delta m^2_1|)$,
   and testing the pair of hypotheses, 
$H_0: (\sin^22\theta,|\Delta m^2|)=(\sin^22\theta_1,|\Delta m^2_1|)$ 
  versus $H_1:$ otherwise. To test the above hypotheses using the Chi-square 
  statistic $\Delta\chi^2$ in Eq.~\eqref{eq:chisqmin}, the full parameter 
space for $(\sin^22\theta,|\Delta m^2|, \eta)$ is 
  $
  \Theta \times M \times \Eta$, and the null hypothesis space is $
  \{(\sin^22\theta_1,|\Delta m^2_1|)\} \times \Eta$.  
  According to the Wilks' theorem~\cite{Wilks-1938}, 
  if certain regularity conditions hold, mainly 
  \begin{enumerate}
   \item[C1.] the full parameter space $
   \Theta \times M \times \Eta$ is a continuous space, 
	and the the model likelihood function is a smooth function 
	(for example three times differentiable) in the parameters,  
  \item[C2.] 
  the full parameter space contains an open neighborhood around the true value $(\sin^22\theta_{1},|\Delta m^2_{1}|,\eta_1)$, and
  \item[C3.] the data size $N_i$ is large for each $i=1,\ldots, n$,
\end{enumerate} then the statistic 
  $\Delta\chi^2 (\sin^22\theta_1,|\Delta m^2_1|; X)$ 
  follows approximately a Chi-square distribution when 
  $X$ is data generated from $H_0$. 
  Further, 
   the degree of freedom of this Chi-square distribution equals the difference 
   between the dimension of  
   the full parameter space and that of the null hypothesis space, namely $2$, in the current case.
  This procedure of constructing CIs and its extensions have been successfully applied in 
  many studies in order to constrain various parameters in the field of neutrino physics (e.g. 
Ref.~\cite{An:2013zwz}).

\begin{figure*}
\centering
\includegraphics[width=150mm]{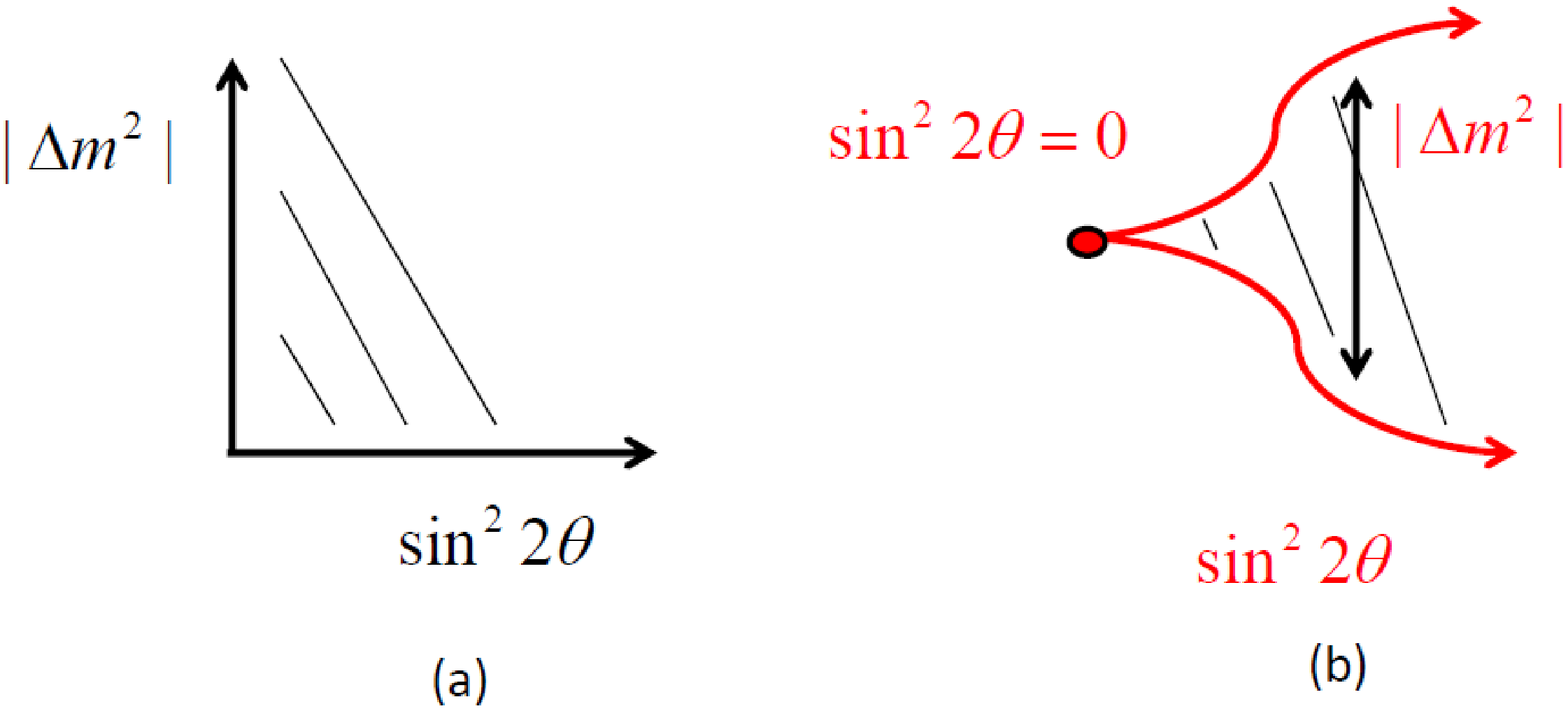}
\caption{(color online) Left panel (a): The parameter space of $\sin^22\theta$ 
	vs. $|\Delta m^2|$ in the Cartesian coordinate. Physical constraints 
	are $\sin^22\theta  \ge 0$ and $|\Delta m^2| \ge 0$.
	Right panel (b): Schematic illustration of the effective parameter space 
	of $\sin^2\theta$ vs. $|\Delta m^2|$ taking into account the spectral 
	difference measured by $\chi^2$ defined in Eq.~\eqref{eq:chi2def}. 
	When $\sin^22\theta = 0$, points with different values of 
	$|\Delta m^2|$ will converge into a single point. This can be easily 
	seen from Eq.~\eqref{eq:osc}. At $\sin^22\theta = 0$, $|\Delta m^2|$ 
	has no impact on the neutrino spectrum. 
	Therefore, when $\sin^22\theta = 0$, there is no open 
	neighborhood around the true value, leading to a failure of 
	regularity conditions required by the Wilks' theorem. 
}
\label{fig:illustration}
\end{figure*}

Although the above Wilks' CI method has been widely used in analyzing experimental 
data, it does not always produce CIs that have correct coverage. 
Its limitations have been addressed by, for example, Feldman and Cousins~\cite{Feldman:1997qc}. 
One example is the searches for neutrino oscillations in the disappearance mode. 
The oscillation probability with $(\sin^22\theta,|\Delta m^2|)$ in a 2-flavor framework 
 is written as: 
\begin{equation}\label{eq:osc}
P_i = 1 - \sin^2 2\theta \cdot \sin^2 (1.27\cdot |\Delta m^2| \cdot L/E^{\nu}_i ), 
\end{equation}
where $L$ and $E^{\nu}_i$ are the distance neutrino travels and the
neutrino energy at the $i$-th bin, respectively. Then the mean bin counts $\lambda_i=\E(\Ni)$ 
are such that $\lambda_i = m\cdot (a_i \cdot P_i + b_i)$, where $a_i$ and $b_i$ are coefficients that depend on 
the vector of nuisance parameters $\eta$, and $m$ represents the amount of 
accumulated data (e.g. the elapsed time for data collection).

The reason why the Wilks' CI method fails for the above neutrino oscillations example 
is the following. 
A key middle step in the proof of the Wilks' theorem is that conditions C1$-$3
together ensure that the estimator of $(\sin^2 2\theta,|\Delta m^2|)$ based on minimizing $\chi^2$, 
has a distribution close to a Gaussian distribution. This suggests two cases. 
(1) When testing a hypothesis $H$ of the form: $\sin^2 2\theta =0$ for any value 
of $|\Delta m^2|$,
C2 is violated, hence the Wilks' theorem does not apply no matter how large the data
 size is. (2) When testing hypotheses of all other forms, C1 and C2 are 
both satisfied, hence as the sample size grows to infinity, the distribution
of $\Delta\chi^2$ will eventually converge to a 
Chi-square distribution. However, for instance if the true $\sin^2 2\theta$ is 
close to $0$, then there could be a non-ignorable probability that we observe a data set 
that results in 
$\sin^22 \theta_{\min}=0$. This clearly prevents the 
distribution of $(\sin^22\theta_{\min},|\Delta m^2_{\min}|)$ 
from being closely approximated by 
a Gaussian distribution. Indeed, the closer $\sin^2 2\theta_{0}$ is 
to $0$, the larger the data size is needed to overcome the above phenomena. 

The latter point can also be understood intuitively. The parameter space 
of $\sin^22\theta$ vs. $|\Delta m^2|$, as is usually displayed in 
Fig.~\ref{fig:illustration}a, is uniform. But the effective parameter space of 
$(\sin^2\theta,|\Delta m^2|)$, in which the distance between any two points is
measured by $\chi^2$ defined in Eq.~\eqref{eq:chi2def}, is no longer uniform (Fig.~\ref{fig:illustration}b).
Due to the functional form of the oscillation formula, the effective
parameter space becomes more compact at smaller $\sin^22\theta$, as the differences
between spectra with different values of $|\Delta m^2|$ become 
smaller. Therefore,  more data is needed to reach the large data limit
required by the Wilks' theorem in order to maintain the 
open neighborhood around the true parameter values (regularity condition C2).
For example, the true $\sin^22\theta = 0$ hypothesis does not have 
an open neighborhood, as $\sin^22\theta<0$ is not allowed. It is therefore impossible 
to reach the large data limit. Even for non-zero but small true value of $\sin^22\theta$, 
the required data size could be well beyond the experimental reach.

When these regularity conditions are not satisfied, there are instances when 
the parent distribution of $\Delta \chi^2$ can have simple approximations that are not necessarily Chi-square. See, for e.g. Ref.~\cite[Sec.~3]{asimov}, where the parameter $\paratheta$ has dimension $1$. For more general cases, one needs the MC method to set CIs. 
Below, we review how to produce a valid 1-$\sigma$
(68\%) CI of ($\sin^22\theta,|\Delta m^2|$) 
using MC, which can be easily generalized to 
build CIs of any level.

Having observed data $x=\{N_1,\cdots, N_n\}$,
apply the following procedure to every 
($\sin^22\theta,|\Delta m^2|$) in the parameter
space $\Theta \times M$: 
\begin{enumerate}
\item Calculate $\Delta \chi^2(\sin^22\theta,|\Delta m^2|;x)$ with
  Eq.~\eqref{eq:chisqmin} based on the observed data.
\item Simulate a large number of MC samples, say $\{x^{(j)}\}_{j=1}^{T}$, 
where $x^{(j)}=\{N_1^{(j)},\cdots,N_n^{(j)}\}$ is generated from the 
model with true parameter value $(\sin^22\theta,|\Delta m^2|)$.  
Here, the nuisance parameters can be either randomly 
generated according to the common hybrid 
Bayesian/Frequentist approach~\cite{Cousins:1991qz} or 
fixed at the best-fit values from data according to the
full Frequentist approach~\cite{full_freq,hybrid_sample}.
For $j=1,\ldots,T$, calculate $\Delta \chi^2(\sin^22\theta,|\Delta m^2|; x^{(j)})$. 
This produces an empirical distribution of the statistic
$\Delta \chi^2$.
\item Calculate the percentage of MC samples such that 
$\Delta \chi^2(\sin^22\theta,|\Delta m^2|; x^{(j)})
<\Delta \chi^2(\sin^22\theta,|\Delta m^2|; x)$. 
Then $(\sin^22\theta,|\Delta m^2|)$ is included in the 1-$\sigma$ 
CI if and only if the percentage is smaller than 68\%. 
\end{enumerate}
The key of the above procedure is to generate an empirical 
distribution of $\Delta \chi^2$, which is not necessarily close to a Chi-square distribution. 

Unlike the Wilks' CI method, the MC CI method 
guarantees the validity of the resulting CIs when the MC sample 
size is large. However,   
the procedure can be very time-consuming when the dimension 
of the vector of unknown parameters is high and/or when a fine grid of the parameter 
space needs to be examined. In addition, the number of MC samples needed to produce an empirical distribution that leads to an accurate enough CI  
increases quickly as the required confidence level increases. 
The procedure can become prohibitively expensive if the minimization process used to find 
($\sin^22\theta_{\min}, |\Delta m^2_{\min}|$) is slow due to the existence
of many nuisance parameters or other technical difficulties. 

Furthermore, there is no simple recipe to strictly combine the CIs generated with the 
MC CI method from different experiments to form an overall CI. 
To see this, consider an example where
several experiments are carried out to probe the parameter space of $(\sin^22\theta, |\Delta m^2|)$. 
For any space point $(\sin^22\theta, |\Delta m^2|)$, the $\Delta \chi^2$ 
statistic of the $j$th experiment is given by $\chi^2(\sin^22\theta,|\Delta m^2|;x^{(j)}) - \chi^2
(\sin^22\theta_{\min}^{(j)},|\Delta m^2_{\min}|^{(j)})$. 
Note that the minimum-value parameter space point, $(\sin^22\theta_{\min},|\Delta m^2_{\min}|)$, 
based on different experiments are typically different. 
Once the experiments are combined, a strict implementation of the MC CI method 
requires to know the global minimum-value parameter space point, which is in general unattainable.  
Indeed, one has to redo MC simulations 
for the combined data, which is expensive in computation since minimization 
has to be done for each MC sample.


In the next section, we introduce a different test statistic from $\Delta\chi^2$, 
which allows for a simple approximation to its distribution under mild conditions. 
Using this new test statistic helps circumvent the computational problems mentioned above.

\section{ the $\Delta T$ Statistic}\label{sec:DeltaT}

\subsection{ Non-nested hypotheses testing}\label{subsec:DT}

Recall that we used $\paratheta$ and $\eta$ to denote the parameter of interest and 
the nuisance parameter respectively. The corresponding model has mean bin counts 
$\left(\lambda_1(\paratheta,\eta), \cdots, \lambda_n(\paratheta,\eta)\right)$. Let $\PT$ denote the parameter space for $\paratheta$.  
In this section, we consider pairs of non-nested hypotheses $H_0: \paratheta=\paratheta_0$ 
and $H_1: \paratheta=\paratheta_1$, one pair at a time, for any 
$\paratheta_0 \neq \paratheta_1 \in \PT$. For convenience and clarity, we 
update some of our notations and refer to the nuisance parameter under $H_0$ and $H_1$ as 
$\nutn$ and $\nuta$ respectively, and they can be of different dimensions. Also, we refer 
to the mean bin counts associated with $\paratheta_0$ and $\paratheta_1$ as $\mu$ and 
$\nu$ respectively, that is, the mean count of the $i$th bin is 
$\mu_i(\nutn)=\lambda_i(\paratheta_0,\nutn)$ under $H_0$ and  
$\nu_i(\nuta)=\lambda_i(\paratheta_1,\nuta)$ under $H_1$.

We now introduce a test statistic, 
denoted by $\Delta T(\paratheta_0,\paratheta_1; x)$, or simply $\Delta T(x)$,
for testing $H_0$ versus  $H_1$.
More than one version of the definition of $\Delta T$ 
will be listed below.

We start with either the Poisson or 
the Normal distribution to model the data $x$, and use the general notation 
$L(x,\lambda)$ to denote the corresponding likelihood, where $\lambda$ equals
to $\mu(\nutn)$ under $H_0$, and  $\nu(\nuta)$ under $H_1$, respectively.
Following the practice of Ref.~\cite[sec.~2]{Baker:1983tu}, 
we convert~\footnote{This is usually done in order that $T_{H_0}(\nutn;x)$ and 
$T_{H_1}(\nutn;x)$ are asymptotically equivalent under certain conditions to their 
counterparts in the classical Chi-square forms, namely, the Neyman and the Pearson 
Chi-square statistics.} the likelihood functions under $H_0$ and $H_1$ into 
$T_{H_0}(\nutn;x)$ and $T_{H_1}(\nutn;x)$ respectively. Let
\begin{equation}
a(x)=2\log L(x,\lambda=x),
\end{equation} and define
\begin{eqnarray}
T_{H_0}(\nutn;x) &=& -2 \log (L(x,\mu(\nutn))) + a(x),\;\;\text{and} \\
T_{H_1}(\nuta;x) &=& -2 \log (L(x,\nu(\nuta))) + a(x)\,,
\end{eqnarray}
both of which can be interpreted as likelihood ratios.
Take the Poisson model for example, we have 
\begin{eqnarray}
2\log L(x,\lambda)&=&\sum_{i=1}^n -2N_i\log\lambda_i +2\lambda_i+2\log(N_i!) \nonumber \\
&\approx & \sum_{i=1}^n [2(\lambda_i -N_i+N_i\log(N_i/\lambda_i))\nonumber \\
&+&\log N_i]+n\log(2\pi),
\end{eqnarray}
 and $a(x)=\sum_{i=1}^n\log N_i +n\log(2\pi)$. Then, looking at the 
definition of $T_{H_0}$ for instance, we have
\begin{equation}
T_{H_0}(\nutn; x) =\sum_{i=1}^{n} 2\left(\mu_i(\nutn) - N_i + N_i \log(N_i/\mu_i(\nutn)) \right).
\end{equation}

In practice, when there are prior experiments carried out to study the 
nuisance parameters, an additional term that reflects deviation from this 
prior knowledge is added to the definition of $T_{H_0}(\nutn; x)$. We denote 
this term by $\chi^2_\text{penalty}(\nutn)$, an example of 
which is the term $\frac{(\eta-\eta_0)^2}{(\sigma_\eta)^2}$ in 
Eq.~\eqref{eq:chi2def}. And when the data size is large, terms of smaller 
order are sometimes omitted from the definition of $T_{H_0}(\nutn;x)$. 
There are at least four common variations for $T_{H_0}(\nutn; x)$ 
used in practice:
\begin{widetext}
\begin{eqnarray}
T_{H_0}(\nutn; x) &=& \sum_{i=1}^{n} 2\left(\mu_i - N_i + N_i \log\frac{N_i}{\mu_i} \right) + \chi^2_\text{penalty}(\nutn), \label{eq:poisson_T}\\
T_{H_0}(\nutn; x) &=& \sum_{i=1}^{n} \log \frac{\mu_i}{N_i} + \sum_{i=1}^{n} \frac{(N_i-\mu_i)^2}{\mu_i} + \chi^2_\text{penalty}(\nutn), \label{eq:Gaussian_T}\\
T_{H_0}(\nutn; x) &=& \sum_{i=1}^{n} \frac{(N_i-\mu_i)^2}{\mu_i} + \chi^2_\text{penalty}(\nutn),  \label{eq:Pearson_T}\\
T_{H_0}(\nutn; x) &=& \sum_{i=1}^{n} \frac{(N_i-\mu_i)^2}{N_i} + \chi^2_\text{penalty}(\nutn). \label{eq:Neyman_T}
\end{eqnarray}
\end{widetext} 
Here, Eq.~\eqref{eq:poisson_T} and Eq.~\eqref{eq:Gaussian_T} origin from 
the likelihood function of the Poisson and the Gaussian distribution, 
respectively. Eq.~\eqref{eq:Pearson_T} and Eq.~\eqref{eq:Neyman_T} are 
variations of Eq.~\eqref{eq:Gaussian_T}, and are commonly 
referred to as the Pearson and the Neyman Chi-square, respectively~\footnote{
As summarized in Ref.~\cite{Baker:1983tu}, all the above estimators had a set of 
properties which the authors considered optimal. They called them "best asymptotically 
normal" (BAN) estimators. The versions of test statistics based directly on 
likelihood functions are considered superior due to their faster convergence 
to the limiting chi-square distributions.} Note that
Eq.~\eqref{eq:chi2def} is a specific example of 
Eq.~\eqref{eq:Pearson_T}. 
We can define four versions of $T_{H_1}(\nuta; x)$ similarly.

En route to form the test statistic $\Delta T$, 
$T_{H_0}(\nutn; x)$ and  $T_{H_1}(\nuta; x)$ are further minimized over all nuisance parameters,
to obtain $T^{\min}_{H_0}(x)=\min_{\nutn} T_{H_0}(\nutn; x)$ and 
$T^{\min}_{H_1}(x)=\min_{\nuta} T_{H_1}(\nuta; x)$, respectively. 
Finally, we define the test statistic 
\begin{equation}\label{eq:deltaT}
\Delta T(x) = T^{\min}_{H_1}(x) - T^{\min}_{H_0}(x)\,.
\end{equation}
Note that $\Delta T(x)$ has the interpretation of being a log-likelihood ratio 
test statistic (or certain variations of it, depending on which version of 
the definition of  $T_{H_0}$ and  $T_{H_1}$ are used) between the two hypotheses. 
It is easy to see that a positive $\Delta T(x)$ would favor 
$H_0$, and a negative $\Delta T(x)$ would favor $H_1$.
In addition, the absolute size of $\Delta T(x)$ reflects how much one 
hypothesis is favored over the other.~\footnote{An alternative 
way to define $\Delta T(x)$ is to replace $T^{\min}_{H_0}(x)$ 
and $T^{\min}_{H_1}(x)$ by $T^{\text{mag}}_{H_0}(x)$ and 
$T^{\text{mag}}_{H_1}(x)$, respectively, which are the marginalized, 
or say integrated version of $T_{H_0}(\nutn; x)$ 
and $T_{H_1}(\nutn; x)$ over all nuisance parameters. 
These two methods generally give very similar results
in practice. From the statistics point of view, while the minimization 
method adopts the Frequentist's philosophy, the marginalization method 
adopts the Bayesian philosophy. }

{\bf Remark.} We emphasize that $\Delta T(x)$ is a different type of test 
statistic than $\Delta \chi^2(x)$  in Eq.~\eqref{eq:chisqmin}. Specifically, 
$\Delta T(x)$ involves the best fit under the restrictions $H_0$ and $H_1$, 
respectively, while $\Delta \chi^2(x)$ involves the best fit under the 
restrictions $H_0$ and over the full parameter space, respectively. 
The way $\Delta T$ is defined is key to why there is a Gaussian approximation 
that works under very general setups, even in the cases where simple 
approximations for the conventional $\Delta\chi^2$ statistic fails. 
Nevertheless, we should note that, when the computing is affordable, forming 
CIs using $\Delta \chi^2$ is more desirable because it leads to a unified 
approach for setting limits 
in absence of new physics signals and in estimating parameters after 
the discovery of new physics~\cite{Feldman:1997qc}. 

Next, we introduce the concept of {\it the Asimov data set}~\cite{asimov}.  Let $x^{\text{Asimov}}_{H_0}$
denote the Asimov data set under $H_0$, which is, loosely speaking,
the mean counts corresponding to the true model in $H_0$ without any
statistical fluctuation nor variations of systematics (change in nuisance 
parameters from their true value).  
In mathematical symbols, $x^{\text{Asimov}}_{H_0}=\mu(\eta^0)$, where $\eta^0$ stands for 
the true value of the nuisance parameter. In practice, we do not know $\eta^0$,
  so it is commonly approximated by 
 an existing nominal value of
  the nuisance parameter (such as the term $\eta_0$ in
  Eq.~\eqref{eq:chi2def}). 
  
Finally, we define a term that will help describe the distribution of the test 
statistic $\Delta T$ under $H_0$. 
Assuming that $H_0$ is the correct hypothesis, define \begin{eqnarray}\label{eq:Tbarn}
\Tbarn &=& \Delta T(x^{\text{Asimov}}_{H_0}) \nonumber\\
	&=& T^{\min}_{H_1}(x^{\text{Asimov}}_{H_0}) - 
   T^{\min}_{H_0}(x^{\text{Asimov}}_{H_0}) \nonumber\\
&=& T^{\min}_{H_1}(x^{\text{Asimov}}_{H_0})\,,
\end{eqnarray}  
where the last step holds because $T^{\min}_{H_0}(x^{\text{Asimov}}_{H_0}) =0$ 
by the definition of $T^{\min}_{H_0}$ and $x^{\text{Asimov}}_{H_0} := \mu(\eta^0)$.

Analogously, let $x^{\text{Asimov}}_{H_1}=\nu(\nuta^0)$ denote the Asimov data set under $H_1$, 
where we can approximate $\nuta^0$ by an existing nominal value. 
Then the following term will help describe the distribution of 
the test statistic $\Delta T$, had $H_1$ been the correct hypothesis:
\begin{eqnarray} \label{eq:Tbara}
     \Tbara &=& \Delta T(x^{\text{Asimov}}_{H_1}) \nonumber \\
     &=& T^{\min}_{H_1}(x^{\text{Asimov}}_{H_1}) -
     T^{\min}_{H_0}(x^{\text{Asimov}}_{H_1}) \nonumber \\
     &=& - T^{\min}_{H_0}(x^{\text{Asimov}}_{H_1}).
 \end{eqnarray}

\subsection{A Gaussian Approximation to the Distribution of $\Delta
  T(X)$ with Large Data Size}\label{sec:Gaussian}
In this section, we show that by omitting terms of relatively small orders, the
distribution of $\Delta T(X)$ 
under hypothesis $H$ follows approximately a Gaussian
distribution with mean $\overline{\Delta T_H}$ and standard deviation
$2\sqrt{|\overline{\Delta T_H}|}$, where $H$ could be either $H_0$ or $H_1$. 

\subsubsection{Description of the mathematical problem and notations}
Recall that we defined four versions of $\left( T_{H_0}(\nutn; x), T_{H_1}(\nuta; x)\right)$ that yield four different definitions of the test statistic $\Delta T(X)$. In this section, we focus on 
studying $\Delta T(X)$ based on Eq.~\eqref{eq:Pearson_T},  namely the Pearson Chi-square 
statistic. For clarity, we call it $D(X)$ from here on. The main part of Sec.~\ref{sec:approxT} 
will be devoted to develop a Gaussian approximation for the distribution of $D(X)$ under $H_0$. 
And in the remarks in the end of Sec.~\ref{sec:approxT}, we show that under $H_0$, the differences between 
the other three versions of $\Delta T(X)$ to $D(X)$ are  insignificant under fairly general conditions, so the 
approximate distribution derived for $D(X)$ can
also be used for all the different versions of $\Delta T(X)$. 
Note that due to the symmetry between $H_0$ and $H_1$, the aforementioned result also 
applies to $D(X)$ and its variations under $H_1$. 

The mathematical problem concerning $D(X)$ is the following. Let
\begin{eqnarray}
\chi^2_{\NH}(\nutn; X)&:=& \sum_{i=1}^\n \frac{({\Ni}-\mu_i(\nutn))^2}{\mu_i(\nutn)} \;\; \text{and} \nonumber \\ 
\chi^2_{\IH}(\nuta; X)&:=& \sum_{i=1}^\n \frac{({\Ni}-\nu_i(\nuta))^2}{\nu_i(\nuta)}
\end{eqnarray}
and let 
\begin{eqnarray}
\hnutn=\arg\min_\nutn \chi^2_{\NH}(\nutn; X) \nonumber \\
\hnuta=\arg\min_\nuta \chi^2_{\IH}(\nuta; X).
\end{eqnarray}
Then the definition of $D(X)$ is
\begin{equation}\label{eq:gaus}
D(X)=\chi^2_{\IH}(X,\hnuta)-\chi^2_{\NH}(X,\hnutn)\,.
\end{equation}
Note that $\left(\chi^2_{\NH}(X,\hnutn), \chi^2_{\IH}(X,\hnuta)\right)$ stands for the version of $\left(T^{\min}_{H_0}(X), T^{\min}_{H_1}(X)\right)$ that is based on Eq.~\eqref{eq:Pearson_T}. 

Our goal is to obtain an approximation of the distribution of $D(X)$ under $H_0$, when 
the data size is large.
 Hence a specific quantity, say
$m$, is needed to reflect the magnitude of the data, in order that
we can describe how other quantities in the model change along with
it.  For example, $m$ could be the duration of the experiment or the
total number of events. For the ease of description, let $m$ represent the
duration of the experiment in this section.  Then $\p=\frac{X}{m}$
stands for the per unit time observed counts in a potential
experiment, and it would remain stable (instead of tending to infinity
or zero) as $m$ grows.  So we say $\p$ is of order $O_p(1)$ (with
respect to $m$)
\footnote{The Big $O$, the small $o$, the Big $O_p$, and the
  small $o_p$ notation are standard mathematical symbols, such that for two 
sequences of random variables $\{X_m\}$ and $\{Y_m\}$, we write
\begin{itemize}
\item $Y_m=o_p(X_m)$ if and only if $Y_m/X_m \rightarrow 0$ in
  probability as $m \rightarrow \infty$, and
\item $Y_m=O_p(X_m)$ if and only if $Y_m/X_m$ is bounded in
  probability as $m \rightarrow \infty$\,.
\end{itemize}
In the special case where $\{X_m\}$ and $\{Y_m\}$ are deterministic sequences, the 
stochastic $o_p$ and $O_p$ symbols reduce to the $o$ and $O$ symbols. 
See Ref.~\cite[sec 2.2]{van:1998} for details on the rules of calculus with these symbols.
}.

In order to describe the modeling of counts rigorously, we introduce a
set of notations, a summary of which is provided in
Table~\ref{tab:legend}.  Recall that when $H_0$ is the correct hypothesis, we employed $\mu(\nutn)$ to denote the mean bin counts for models under this hypothesis, where $\nutn$ is the vector of unknown nuisance parameters of
dimension $q$.  Denote the true value of $\nutn$ by $\nutnT$, that is,
$\muT=\mu(\nutnT)$ is the true mean counts of the observation such
that
\[{\Ni} \ind\;\; \text{Poisson}(\muT_i)\;\;\;\text{for
  $i=1,\cdots,\n$.}\] 
  When the data size is large, a very good
approximation to the model above is given by
\[{\Ni} \ind\;\; \text{N}(\muT_i,\muT_i)\;\;\;\text{for
  $i=1,\cdots,\n$.}\]
Further, let $\pi:=\mu/m$ 
denote the per unit time mean counts.  To help explain these
notations, take the example from Sec.~\ref{sec:procedure} for
instance, if $H_0: (\sin^2 2\theta,|\Delta m^2|)=(\sin^22\theta_0, |\Delta
m^2_{0}|)$ is the correct hypothesis, then $\pi_i=\mu_i(\nutn)/m= a_i(\nutn) \cdot P_i + b_i(\nutn)$. 
The terms $a_i$ and $b_i$ are 
functions of order $O(1)$, and are determined by the configuration of the
experiment. For example, $a_i$ can represent the detector efficiency, neutrino flux
from reactor, target mass, etc., $b_i$ can represent the backgrounds. Also,
$P_i = 1 - \sin^2 2\theta_0 \cdot
\sin^2 (1.27\cdot |\Delta m_0^2| \cdot L/E^{\nu}_i )$ represents the
survival probability in a disappearance model.

\begin{table*}
  \begin{center}
    \begin{tabular}{|l|l|l|}
      \hline
      \rule{0pt}{2.4ex} & \textbf{Under the correct hypothesis} & \textbf{Under the alternative hypothesis}  \\
      \hline
      \hline
      \textbf{General notation}  & &\\
      \hline
      Mean bin counts &   $\mu(\nutn)=(\mu_1(\nutn),\cdots,\mu_N(\nutn))$ &   $\nu(\nuta)=(\mu_1(\nuta),\cdots,\mu_N(\nuta))$\\
      \hline
      Per-unit mean counts &  $\pi(\nutn)=\mu(\nutn)/m$ &   $\tau(\nuta)=\nu(\nuta)/m$\\
      \hline
      \hline
      \textbf{True values or their closest approximations}  & &\\
      \textbf{under the give model}  & &\\
      \hline
      nuisance parameter &  $\nutn_0$ (a $q$-dim vector) &   $\nuta_0$ (a $q^*$-dim vector)\\
      \hline
      Mean bin counts &  $\mu^0=\mu(\nutn_0)$ &   $\nu^0=\nu(\nuta_0)$\\
      \hline
      Per-unit mean counts &  $\pi^0=\mu^0/m$ & $\tau^0=\nu^0/m$ \\
      \hline
      \hline
      \textbf{Estimation based on observed data}  & &\\
      \hline
      nuisance parameter &  $\hnutn=\arg\min\chi^2_{\NH}(\nutn; X) $ &   $\hnuta=\arg\min\chi^2_{\IH}(\nuta; X)$\\
      \hline
      Mean bin counts &  $\hmu=\mu(\hnutn)$ &   $\hnu=\nu(\hnuta)$\\
      \hline
      Per-unit mean counts &  $\hpi=\hmu/m$ & $\htau=\hnu/m$ \\
      \hline
    \end{tabular}
    \caption{Legend of symbols used in describing the correct model and the alternative model, respectively.}
    \label{tab:legend}
  \end{center}
\end{table*}

Meanwhile, a competing framework, namely 
the collection of models that satisfy $H_1$,
specifies the mean counts incorrectly as $\nu(\nuta)$, where
$\nuta$ is the unknown nuisance parameter of dimension $q^*$.  Also,
define the per unit time mean counts under $H_1$ by
$\tau=\nu/m$. When $H_0$ is the correct hypothesis and that the true model is $\muT$,
 there exists a unique $\nutaT$, such that
$\hnuta$ approaches 
$\nutaT$ as $m\rightarrow \infty$. We will show in Appendix~A that
$\nutaT$ has the interpretation that it corresponds to the model
$\nu(\nuta)$ among all that belong to the alternative framework that
is the closest to the true model $\muT$ in terms of the deviation
$\sum_{i=1}^\n\frac{(\mu^0_i-\nu_i(\nuta))^2}{\nu_i(\nuta)}\,$.
Denote $\nu^0=\nu(\nutaT)$.

\begin{widetext}
\subsubsection{Approximating the distribution of the test statistic $D(X)$}\label{sec:approxT}

In this section, we always assume that $H_0$ is the correct hypothesis, under which we study the distribution of $D(X)$ defined in Eq.~\eqref{eq:gaus}. For convenience, we will suppress the dependence on
$X$ in the notation, and write
$D=\chi^2_{\IH}(\hnuta)-\chi^2_{\NH}(\hnutn)$. On one hand, it's well
known that the distribution of $\chi^2_{\NH}(\hnutn)$ approaches the
Chi-square distribution with degree of freedom $(\n-q)$ as $m$
increases. On the other hand, the limiting distribution of
$\chi^2_{\IH}(\hnuta)$ as $m$ increases does not always exist.
Indeed, the behavior of $\chi^2_{\IH}(\hnuta)$ for large $m$ is
dependent on how far apart the mean counts of the best model under
the alternative theoretical frameworks are from that of the true
model. Denote the difference of per unit mean counts between the
two models by $\delta=\pi^0-\tau^0$. First, we state a classical assumption made in many 
statistical literatures (such as Ref.~\citep{wald:1943} and Ref.~\cite{asimov}) in order to obtain the limiting distribution of test statistics 
analogous to
$\chi^2_{\IH}(\hnuta)$, that is, the different versions of $T^{\min}_{H_1}(X)$:\\\vspace{-2mm}

\begin{center}
$[$A1$]$ 
$\delta=\pi^0-\tau^0= O(m^{-\frac{1}{2}})$, that is, $\mu^0-\nu^0=O(m^{\frac{1}{2}})$.\\
\end{center}
Assumption [A1] means that the best model under the wrong hypothesis is just barely incorrect. 
For example, under [A1], Ref.~\citep{wald:1943} showed that the likelihood ratio test statistic for 
testing ${\IH}$ against the full parameter space (that is, the statistic $T^{\min}_{H_1}(X)$ based 
on Eq.~\eqref{eq:poisson_T} and Eq.~\eqref{eq:Gaussian_T}) has a limiting non-central Chi-square 
distribution. 
The non-centrality parameter has the same form as the test statistic, but with $\hat{\mu}$ 
and $\hat{\nu}$ replaced by $\mu^0$ and $\nu^0$. For a simplified presentation of this result, 
see, for example, Ref.~\citep[Sec.~3.1]{asimov}. Under [A1], the non-centrality parameter is finite 
and the non-central Chi-square approximations are accurate to the extent that the 
$O_p(m^{-\frac{1}{2}})$ terms are neglected.

 In contrast to [A1], we consider the following assumption, which is more general and realistic 
for the physics problem at hand:\\\vspace{-2mm}

\begin{center}
$[$A0$]$ 
$\delta=\pi^0-\tau^0=O(1)$, that is, $\mu^0-\nu^0=O(m)$.\\
\end{center}
In words, [A0] assumes that the difference in mean bin counts between the best model under the 
wrong hypothesis and the true model increases at the same rate as the data size $m$, or slower. 
Clearly, [A0] is a more relaxed condition than [A1], in the sense that [A1] implies [A0], but not vice versa. An example where  [A0] holds and [A1] 
does not, is the case that the nuisance parameter is absent: each hypothesis allows exactly one model, such that the per unit mean bin counts of the model under $H_0$ is $\pi^0$, and that under $H_1$ is $\tau^0$, where $\pi^0$ and $\tau^0$ are vectors of constants that do not change with the data size $m$. As for general cases where there are nontrivial nuisance parameters, it is possible that the best model under $H_1$ can lead to $\tau^0$ values that move closer to the truth $\pi^0$ as more data become available. Hence $[A1]$ may become satisfied, while [A0] is always satisfied. In situations where one is unwilling to assert a convergence rate as fast as $O(m^{-\frac{1}{2}})$ for $(\tau^0-\pi^0)$, if the convergence occur at all, [A0] is more appropriate than [A1].

To see the impact of using  [A0] instead of [A1], it turns out that when 
$\lim_{m\rightarrow \infty}m^{\frac{1}{2}}\delta=\infty$, various test statistics similar to 
$\chi^2_{\IH}(\hnuta)$ (these are the different versions of $T_{H_1}^{\min}$ that we mentioned 
in Sec.~\ref{subsec:DT}) would be unbounded in probability. Take the likelihood ratio 
test statistic mentioned above for example, the non-centrality parameter in the previous 
approximation grows to infinity as $m$ increases.
Further, the differences between the different versions of $\chi^2_{\IH}(\hnuta)$ usually do 
not converge to 0 as $m$ increases. 

Although the limiting distribution does not necessarily exist under assumption [A0], it is 
still possible to approximate the distribution of $\chi^2_{\IH}(\hnuta)$ at a finite, but 
large enough $m$. We make such an attempt, but this certainly requires a different derivation 
than the existing proofs that assume [A1]. 
In our derivation, we keep track of the terms that have higher order than constants when the 
data size $m$ grows. 
Our proof follows the lines of that of Ref.~\citep[Chap.~16]{stat3}, but with significant modifications. 

Write
\[\chi^2_{\IH}(\hnuta)=\sum_{i=1}^\n \frac{({\Ni}-\hnu_i)^2}{\hnu_i}=:\sum_i f_i^2\;\;\;\text{and}\;\;\;\;\;\;
\chi^2_{\NH}(\hnutn)=\sum_{i=1}^\n
\frac{({\Ni}-\hmu_i)^2}{\hmu_i}=:\sum_i e_i^2\,.\] Here
\[f_i=f_i(\ba,\bb,\bc)=\frac{{\Ni}-\hnu_i}{\hnu_i^{\frac{1}{2}}}=\sqrt{m}\,\frac{\p_i-\htau_i}{\htau_i^{\frac{1}{2}}}=
\sqrt{m}\,\frac{(\p_i-\pi^0_i)-(\htau_i-\tau^0_i)+(\pi^0_i-\tau^0_i)}{((\htau_i-\tau^0_i)+\tau^0_i)^{\frac{1}{2}}}=:
\sqrt{m}\,\frac{a_i-c_i+\delta_i}{(c_i+\tau^0_i)^{\frac{1}{2}}}\,,
\] 
and
\[e_i=e_i(\ba,\bb,\bc)=\frac{{\Ni}-\hmu_i}{\hmu_i^{\frac{1}{2}}}=\sqrt{m}\,\frac{\p_i-\hpi_i}{\hpi_i^{\frac{1}{2}}}= \sqrt{m}\,\frac{(\p_i-\pi^0_i)-(\hpi_i-\pi^0_i)}{((\hpi_i-\pi^0_i)+\pi^0_i)^{\frac{1}{2}}}=: \sqrt{m}\,\frac{a_i-b_i}{(b_i+\pi^0_i)^{\frac{1}{2}}}\,,
\]  
where $a_i=\p_i-\pi^0_i=O_p(m^{-\frac{1}{2}})$, $b_i=\hpi_i-\pi^0_i$, and $c_i=\htau_i-\tau^0_i=O_p(m^{-\frac{1}{2}})$, for $i=1,\cdots, n$.
Then by the Taylor expansion of $\f=(f_1,\cdots, f_n)^T$ and $\e=(e_1,\cdots, e_n)^T$ around $(\ba, \bb, \bc)=(0,0,0)$, we have
\[\f=\sqrt{m}\, \dg\{\tau^0\}^{-\frac{1}{2}}\delta+ 
\dg\{\tau^0\}^{-\frac{3}{2}}\left[\dg\{\tau^0\} -\frac{1}{2}\,
  \dg\{\pi^0+\tau^0\} E^*\right]\sqrt{m}(\p-\pi^0) +
O_p(m^{-\frac{1}{2}})\,,\] where the three terms in the above
expression are of order $O_p(m^{\frac{1}{2}})$, $O_p(1)$ and
$O(m^{-\frac{1}{2}})$ respectively.  Further,
 \[\e=\dg\{\piT\}^{-\frac{1}{2}}(I-D)\sqrt{m}(\p-\piT)+O_p(m^{-\frac{1}{2}})
 \,,\]
where $D= B (B^T \dg\{\piT\}^{-1} B)^{-1} B^T \dg\{\piT\}^{-1}$\,,
and the two terms in the above expression are of order $O_p(1)$ and $O_p(m^{-\frac{1}{2}})$ respectively. 
Therefore
\[\begin{split}
D=&\chi^2_{\IH}(\hnutn)-\chi^2_{\NH}(\hnuta)=\f^T\f-\e^T\e\\
=& m\, \delta^T\dg\{\tau^0\}^{-1} \delta+2\sqrt{m}\,\delta^T \dg\{\tau^0\}^{-2}\left[\dg\{\tau^0\} -\frac{1}{2}\, \dg\{\pi^0+\tau^0\} E^*\right]\sqrt{m}(\p-\pi^0)+O_p(1)\\
=&m\, \delta^T\dg\{\tau^0\}^{-1} \delta+2\sqrt{m}\,\delta^T \dg\{\tau^0\}^{-1}\sqrt{m}(\p-\pi^0)  \\
&-\frac{1}{2}\, \sqrt{m}\,\left[\delta^T
  \dg\left\{\frac{\pi^0+\tau^0}{(\tau^0)^2}\right\} B^*\right]
\left({B^*}^T \dg\left\{\frac{(\pi^0)^2}{(\tau^0)^3}\right\}
  B^*\right)^{-1}{B^*}^T
\dg\left\{\frac{\pi^0}{(\tau^0)^2}\right\}\sqrt{m}(\p-\pi^0)+O_p(1)
\end{split}\] According to Eq.~\eqref{eq:lemma2} of Lemma 1, the term in
the closed bracket above reduces to $0$. Hence
\[D= m\, \delta^T\dg\{\tau^0\}^{-1} \delta+2\sqrt{m}\,\delta^T \dg\{\tau^0\}^{-1}\sqrt{m}(\p-\pi^0) +O_p(1)\,. \]
Denote the first term of $D$ by
\begin{equation}\label{eq:D}
D_1=m\,
  (\pi^0-\tau^0)^T\dg\{\tau^0\}^{-1}(\pi^0-\tau^0)=\sum_{i=1}^\n\frac{(\mu^0_i-\nu^0_i)^2}{\nu^0_i}=\min_{\nu}\sum_{i=1}^\n\frac{(\mu^0_i-\nu_i)^2}{\nu_i}
  =:\D 
  \,,\end{equation} where the second to last equality follows from
Appendix~A. Note that under assumption [A0], $D_1=\D$ 
is of order $O(m)$.  Next denote the second term of $D$ by $D_2$. The
central limit theorem implies that as $m$ increases to infinity,
$\sqrt{m}\,(\p-\pi^0)$ converges in distribution to the
N$(0,\dg\{\pi^0\})$ distribution. Hence $D_2/(2\sqrt{m})$ converges in
distribution to the N$(M_2,V_2)$ where \[M_2=\delta^T
\dg\{\tau^0\}^{-1} 0=0\,,\] and
\[\begin{split} V_2&=\delta^T \dg\{\tau^0\}^{-1} \dg\{\pi^0\}\dg\{\tau^0\}^{-1} \delta\\
  &=\delta^T \dg\{(\tau^0)^{-1}\} \delta+\delta^T \dg\{\frac{\pi^0-\tau^0}{(\tau^0)^2}\} \delta\\
  &=\frac{\D
}{m}+\sum_{i=1}^\n\frac{(\pi^0_i-\tau^0_i)^3}{(\tau^0_i)^2}=:\frac{\D 
}{m}+s\,.\end{split}\] Note that both $\frac{\D 
}{m}$ and $s$ are of order $O(1)$ under assumption [A0], hence
$D_2=O_p(m^{\frac{1}{2}})$.  In summary, under assumption [A0],
we have $D=D_1+D_2+O_p(1)
$, where 
\begin{equation}\label{eq:proof}
 D_1+D_2 \apsim \text{N} (\D, 4\D+4ms)\,.
\end{equation}

\end{widetext}

\textbf{Remarks and Implications of Eq.~\eqref{eq:proof}} 

\begin{enumerate}
\item For the common physics problem that we are interested in, 
additional simplification can be made to the approximating distribution, 
$\text{N}(\D, 4\D+4ms)$. Specifically, in searching for
new physics through precision measurements, the mean bin counts  of the true model and that of the best model under the alternative hypothesis are relatively close to each other, that is,
 \[|\mu^0_i-\nu^0_i| << \mu^0_i \sim \nu^0_i,\] or in other words,
  \begin{eqnarray}\label{eq:cond}
  \frac{(\mu_i^0 -
    \nu_i^0)}{\mu_i^0}&=&\frac{ \delta^0_i}{\pi^0_i} << 1 \;\;\text{and}\;\; \nonumber\\
 \frac{(\mu_i^0 -
    \nu_i^0)}{\nu_i^0}&=&\frac{ \delta^0_i}{\tau^0_i} << 1\,.
\end{eqnarray}
In such situations, one can 
  ignore the $ms$ term in Eq.~\eqref{eq:proof}, 
  because $ms=\sum_i \frac{(\mu_i^0 -
    \nu_i^0)^2}{\nu_i^0}\cdot\frac{\mu_i^0 - \nu_i^0}{\nu_i^0}<<
  \sum_i \frac{(\mu_i^0 - \nu_i^0)^2}{\nu_i^0} = \D $.  Then
  our main result becomes 
\begin{equation}\label{eq:ultimate}
D(X) \apsim N(\D, 4\D).
\end{equation}

\item We claimed in Sec.~\ref{sec:CLs_1} that, at large data limit, the three versions of 
$\Delta T(X)=T^{\min}_{H_1}(X) - T^{\min}_{H_0}(X)$ based on the definition of $T_{H_0}$ 
(and the corresponding $T_{H_1}$) in Eq.~\eqref{eq:poisson_T}, \eqref{eq:Gaussian_T}, 
and Eq.~\eqref{eq:Neyman_T}, each have negligible difference from the $D(X)$. 
We validate this claim as follows.

For the moment, we drop the penalty term $\chi^2_\text{penalty}(\nutn)$ from 
Eq.~\eqref{eq:poisson_T}--\eqref{eq:Neyman_T} for simplicity. And we will address the 
issue of the penalty term in the next remark.

First, for Eq.~\eqref{eq:poisson_T}, 
we have, 
\begin{eqnarray}
T_{H_0}(X) &=& \sum_i 2\left(\mu_i - N_i + N_i \log \left(1 - \frac{\mu_i-N_i}{\mu_i}\right)\right). \nonumber \\
&=&  \sum_i \frac{(\mu_i-N_i)^2}{\mu_i} + O_p(m^{-\frac{1}{2}}) \nonumber \,.
\end{eqnarray}
The last step was obtained through expanding $\log \left(1 - \frac{\mu_i-N_i}{\mu_i}\right)$ at large data limit,($|N_i - \mu_i| = O_p(\mu_i^{\frac{1}{2}})=O_p(m^{\frac{1}{2}})$).
Next, for Eq.~\eqref{eq:Gaussian_T},  
we have,
\begin{eqnarray*}
T_{H_0}(X) 
&=& \sum_i \left( \frac{(\mu_i-N_i)^2}{\mu_i} + \log \left(1+\frac{\mu_i-N_i}{N_i}\right) \right) \\
&=&\sum_i\frac{(\mu_i-N_i)^2}{\mu_i}+O_p(m^{-\frac{1}{2}}).
\end{eqnarray*}
Finally, for Eq.~\eqref{eq:Neyman_T}, we have,
\begin{eqnarray*}
T_{H_0}(X) &=& \sum_i \frac{(\mu_i-N_i)^2}{N_i} \\
&=& \sum_i\frac{(\mu_i-N_i)^2}{\mu_i}+O_p(m^{-\frac{1}{2}}).
\end{eqnarray*}
The differences between each version of $T_{H_0}(X) $ and 
$ \sum_i\frac{(\mu_i-N_i)^2}{\mu_i}$ are negligible.
Next we examine the differences between each version of $T_{H_1}(X) $ and 
$ \sum_i\frac{(\nu_i-N_i)^2}{\nu_i}$. We will only consider situations where 
condition Eq.~\eqref{eq:cond} hold. If so, the term 
$\frac{\nu_i-N_i}{N_i}=\frac{\delta_i}{\pi_i}+O_p(m^{-\frac{1}{2}})<<1$,
which will help validate the following three approximations.
First, for Eq.~\eqref{eq:poisson_T}, we have, 
\begin{widetext}
\begin{eqnarray*}
T_{H_1}(X) &=& \sum_i 2\left(\nu_i - N_i + N_i \log \frac{N_i}{\nu_i}\right)  
= \sum_i 2\left(\nu_i - N_i + N_i \log \left(1 - \frac{\nu_i-N_i}{\nu_i}\right)\right)  \\
 &=& \sum_i \frac{(\nu_i-N_i)^2}{\nu_i} \cdot \left(1 + O\left(\frac{(\nu_i-N_i)}{\nu_i}\right)\right)\,.
\end{eqnarray*}
Next, for Eq.~\eqref{eq:Gaussian_T},  we have,
\begin{eqnarray*}
T_{H_1}(X) 
= \sum_i  \frac{(\nu_i-N_i)^2}{\nu_i} - \sum_i \log \left(1-\frac{\nu_i-N_i}{\nu_i}\right)
 = \sum_i\frac{(\nu_i-N_i)^2}{\nu_i} \cdot \left(1+ O\left(\frac{1}{\nu_i-N_i}\right)\right)\,.
\end{eqnarray*}

Finally, for Eq.~\eqref{eq:Neyman_T}, we have,
\begin{eqnarray*}
T_{H_1}(X) = \sum_i \frac{(\nu_i-N_i)^2}{N_i}
= \sum_i\frac{(\nu_i-N_i)^2}{\nu_i} \cdot \left(1+ O\left(\frac{\nu_i - N_i}{N_i}\right)\right)\,.
\end{eqnarray*}
\end{widetext}
In situations where Eq.~\eqref{eq:cond} is satisfied, the differences between each version of $T_{H_1}(X) $ and $ \sum_i\frac{(\nu_i-N_i)^2}{\nu_i}$ are very small compared to the latter and are hence negligible.

It follows that the different versions 
of the test statistic $\Delta T(X)$ will behave similarly as $D(X)$. 
Finally, it is easy to see that our definition for $\D$ in Eq.~\eqref{eq:D} is 
equivalent to $\overline{\Delta T}$ (defined in Eq.~\eqref{eq:Tbarn})
based on Eq.~\eqref{eq:Pearson_T}. 
Our main result in Eq.~\eqref{eq:ultimate} can be stated as
\begin{equation}
\Delta T \apsim N(\overline{\Delta T}, 4|\overline{\Delta T}|).
\end{equation}

\item 
  We emphasize that Eq.~\eqref{eq:gaus} is a specific form of T 
 in Eq.~\eqref{eq:poisson_T}, \eqref{eq:Gaussian_T}, 
\eqref{eq:Pearson_T}, and Eq.~\eqref{eq:Neyman_T}. 
  The penalty term in $T$ represents the constraint of systematic
  uncertainties, and is commonly obtained by dedicated
  measurements. When one includes the dedicated measurements as part
  of Chi-square definition, one naturally recovers Eq.~\eqref{eq:gaus}.
  Therefore, our proof in Sec.~\ref{sec:Gaussian} is also valid for
  test statistics with the format of $T$ in Eq.~\eqref{eq:poisson_T}, 
 \eqref{eq:Gaussian_T}, \eqref{eq:Pearson_T}, and Eq.~\eqref{eq:Neyman_T}.\\

\item 
We comment on the large data limit, which is required to reach the final
conclusion (Eq.~\eqref{eq:ultimate}) and to show the equivalence of 
Eq.~\eqref{eq:poisson_T}, \eqref{eq:Gaussian_T}, \eqref{eq:Pearson_T}, and 
Eq.~\eqref{eq:Neyman_T}. For a single bin, 
$\frac{N_i-\mu_i}{\mu_i}$ is negligible if $N_i$ is greater than about $100$. 
For multiple bins, the contributions from each bin will likely cancel 
and the condition can be relaxed in practice to that the total number of 
events, $\sum_i N_i$, is greater than about $100$.
\end{enumerate}

In summary of this section, we showed the following result. Assume the following set 
of conditions hold:
\begin{enumerate}
 \item CD1: the parameter space ($\Eta$) of the nuisance parameters $\nutn$ and $\nuta$
   	 are both continuous and the the model likelihood function is a smooth function 
	(for example three times differentiable) in the parameters,  
 \item CD2: the data size $N_i$ is large for each $i=1,\ldots, n$,
 \item CD3: the best model under the null hypothesis $H_0$ and the alternative hypothesis $H_1$ 
are relatively close, in the sense that 
$|\mu_i^0-\nu_i^0| << \mu_i^0 \sim \nu_i^0$, for $i=1,\ldots, n$\,.
\end{enumerate}
Then a simple approximation for the distribution of $\Delta T(X)$ 
under $H_j$, for either $j=0$ or $1$, is the Gaussian distribution 
with mean $\overline{\Delta T_{H_j}}$ and standard deviations  
$2\sqrt{|\overline{\Delta T_{H_j}}|}$.
Based on the Gaussian approximation, the CL$_s$ value is easily calculated with 
$\Delta T (x)$, $\overline{\Delta T_{H_0}}$, and $\overline{\Delta T_{H_1}}$.  
In case any of the above conditions (CD1-CD3) breaks down, the 
distribution of $\Delta T(X)$ is not necessarily well approximated by the Gaussian 
distribution, and should instead be estimated through Monte Carlo simulations.

\section{The CL$_s$ Approach Based on the $\Delta T$ statistic}
\label{sec:CLs}

The $\Delta T(x)$ statistic described in the previous section can be used to form both 
CIs and CL$_s$, and they differ in how the associated p-values are utilized. Note that 
both procedures are easy to carry out because of the simple Gaussian approximation for 
the distribution of $\Delta T(x)$. We will introduce the CL$_s$ approach with the 
$\Delta T(x)$ statistic below. The principle of forming CIs with $\Delta T(x)$ is the same 
as that with $\Delta \chi^2$. 

\subsection{The CL$_s$ Approach Based on the $\Delta T$ Statistic}\label{sec:CLs_1}

\begin{figure}
\centering
\includegraphics[width=75mm]{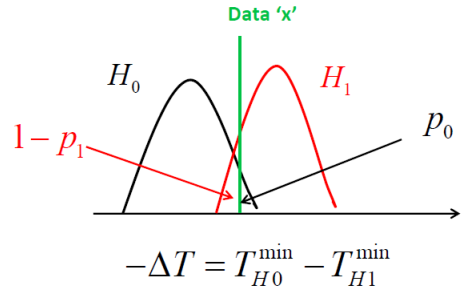}
\caption{(color online) Illustration of the CL$_s$ approach with log-likelihood ratio
$\Delta T$. In order to be consistent with the convention in Ref.~\cite{Lyons:1900zz}, 
we plot the densities of $-\Delta T$ instead. See text for more discussions.}
\label{fig:CLs}
\end{figure}

The CL$_s$ approach~\cite{Read:2000ru,Junk:1999kv,Read:2002hq} is a popular approach 
to present searches for new physics beyond the Standard Model. Recent examples
of using this approach in neutrino physics can be found in 
Ref.~\cite{An:2014bik,Anokhina:2014qda}. Examples of using this approach in 
LHC super particle search can be found at Ref.~\cite{Aad:2012ms,Chatrchyan:2012jx}.
We emphasize that, the CL$_s$ approach is a different way to present statistical results 
than the traditional approach of setting confidence intervals (CI). The traditional CI 
approach is appropriate in treating established signals~\cite{Read:2002hq}.
Whereas the CL$_s$ approach is appropriate 
in setting exclusion limits, such that models with parameter values beyond the limits are much worse than the Standard Model in fitting the observed data. In this 
section, we briefly review the principle of the CL$_s$ approach in a 
two-hypotheses testing problem.

Fig.~\ref{fig:CLs} is a heuristic illustration of the distribution of the log-likelihood 
ratio $\Delta T(X)$, where $X$ stands for data from a potential repeat of the experiment. 
The black (red) curve stands for the density function of the expected distribution of  $\Delta T (X)$ 
under the assumption 
that the null (alternative) hypothesis is true. The green line represents $\Delta T (x)$ 
calculated from the observed data $x$. A positive (negative) $\Delta T(x)$ would favor 
$H_0$ ($H_1$) over $H_1$ ($H_0$). The CL$_s$ value is then defined as:
\begin{equation}\label{eq:cls}
{\rm CL}_s(x) = \frac{1-p_1}{1-p_0},
\end{equation}
where $1-p_1$ ($1-p_0$) is the probability that a potential repeat of the 
experiment will yield a $\Delta T(X)$ value larger than $\Delta T (x)$ 
when the alternate (null) hypothesis is true.   
Hence, the definition of CL$_s$ in Eq.~\eqref{eq:cls} suggests that a 
CL$_s$ value close to zero would favor $H_0$ against $H_1$. On 
the other hand, as illustrated in Fig.~\ref{fig:CLs2}, a CL$_s$ value 
close to one does not necessarily indicate that $H_1$ is favored against $H_0$.
In searching for new physics beyond the Standard Model, $H_0$ is 
typically specified to be the Standard Model. The exclusion region of 
the parameter space is typically defined as the set of parameter values of new physics that 
corresponds to CL$_s$ value smaller than $\alpha=0.05$~\cite{Lyons:1900zz},
while other threshold values of the CL$_s$ can be used as well.

\begin{figure}
\centering
\includegraphics[width=75mm]{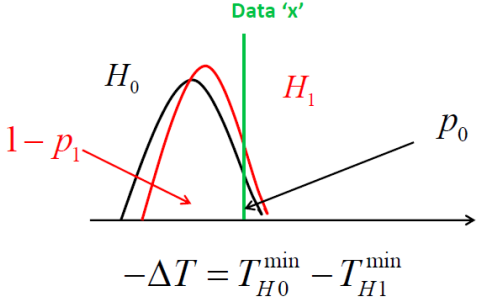}
\caption{(color online) Same as Fig.~\ref{fig:CLs} except that the null 
hypothesis $H_0$ and the alternative hypothesis
$H_1$ are very close to each other. }
\label{fig:CLs2}
\end{figure}

Note that the CL$_s$ value is never smaller than $(1-p_1)$, the p-value used in the corresponding CI approach. Hence, had the exclusion contour at $\alpha$ been used to set a CI, it would have coverage probability 
over $1-\alpha$. Nevertheless, the CL$_s$ value appears to be a more reasonable measure of extremeness 
than $(1-p_1)$, in situations where $H_0$ and $H_1$ are very similar (see Fig.~\ref{fig:CLs2}).
For example, assuming the data $x$ is
an ``extreme'' measurement with respect to  $H_1$ 
(i.e. small $p_1$), it will also be disfavored by  $H_0$ 
(i.e. small $p_0$).   If only a single p-value, either $p_0$ or $p_1$, 
is examined as in the CI approach, then one would draw the inappropriate 
conclusion of excluding $H_0$ or $H_1$ while favoring the other hypothesis.
However, since the hypotheses $H_0$ and $H_1$ are similar, the data 
does not carry enough information to 
differentiate them. The CL$_s$ value, which is the ratio between $1-p_1$ and $1-p_0$ will 
protect against such situations. 

In order to obtain the value of $p_0$ and $p_1$ required to 
calculate the CL$_s$, one needs to find the distribution of $\Delta T(X)$ under $H_0$ and $H_1$. 
While Monte Carlo simulations
can provide approximations to the distribution of $\Delta T(X)$, 
simpler methods, such as Gaussian approximations, are desired to lower the computing burden. 

\subsection{Setting Exclusion Sets with the Gaussian CL$_s$ Method}\label{sec:procedure}

In this section, we illustrate the procedure of setting exclusion 
sets with the Gaussian CL$_s$ method for the neutrino oscillation 
example from Sec.~\ref{sec:review}. 

Here the parameter of interest is $\paratheta=(\sin^2 2\theta,|\Delta m^2|)$. 
The  mean count for the $i$th bin is described as 
$\mu_i(\nutn) = m\cdot (a_i(\nutn) P_i(\sin^22\theta_0,|\Delta m_0^2|) + b_i(\nutn))$,
where $a_i$ and $b_i$ are coefficients that depend on 
the vector of nuisance parameters $\eta$, and $m$ represents the amount of 
accumulated data.
It is typical to use $\paratheta_0=(0, |\Delta m^2_{0}|)$ as a reference parameter point, 
where $|\Delta m^2_{0}|$ can be any fixed value since it does not enter 
the model for bin counts when $\sin^2 2\theta=0$. In this case, 
the null hypothesis is specified to be $H_0$: $\paratheta=\paratheta_0$ (i.e. the Standard
Model with three light neutrinos). 
Next, for any $\paratheta_1=(\sin^22\theta_1,|\Delta m_1^2|)$ from the parameter
space $\Theta \times M$, 
specify the alternative hypothesis to be $H_1: \beta= \beta_1$, and perform the following procedure:
\begin{enumerate}
\item From the observed data $x$, obtain
\begin{equation*}
  \Delta T(x) := T^{\min}_{H_1}(x) - T^{\min}_{H_0}(x).
\end{equation*} 
\item From the Asimov data set $x^{\text{Asimov}}_{H_0}$, obtain 
\[\Tbarn=\Delta T(x^{\text{Asimov}}_{H_0}) =T^{\min}_{H_1}(x^{\text{Asimov}}_{H_0})\] according to Eq.~\eqref{eq:Tbarn}.
  Then according to the main result that we prove in
  Sec.~\ref{sec:Gaussian}, under $H_0$,
  $\Delta T(X)$ follows approximately a Gaussian
  distribution with mean $\overline{\Delta T_{{H_0}}}$ and standard deviation
  $2\sqrt{|\overline{\Delta T_{{H_0}}}|}$. This suggests that one can approximate $1-p_0$ using
\begin{eqnarray}
  1-p_0 \approx \frac{1+ {\rm Erf}\left(\frac{\Tbarn - \Delta T (x)}{\sqrt{8 |\Tbarn|}} \right)}{2},
\end{eqnarray}
where ${\rm Erf}(s) = \frac{2}{\sqrt{\pi}} \int_0^s e^{-t^2}dt$
is the Gaussian error function for any $s \in (-\infty, \infty)$.
\item Similarly, from the Asimov data set $x^{\text{Asimov}}_{H_1}$, 
obtain
  \[\Tbara =\Delta T(x^{\text{Asimov}}_{H_1})=- T^{\min}_{H_0}(x^{\text{Asimov}}_{H_1})\,.\]
  according to Eq.~\eqref{eq:Tbara}.
Then one can approximate $1-p_1$ using
\begin{eqnarray}
  1-p_1 \approx \frac{ 1+ {\rm Erf}\left(\frac{\Tbara - \Delta T (x)}{\sqrt{8|\Tbara|}} \right)}{2}.
\end{eqnarray}
\item According to Eq.~\eqref{eq:cls}, the CL$_s$ value at
  ($\sin^22\theta_1,|\Delta m_1^2|$) can be approximated by
  \begin{equation} {\rm CL}_s \approx \frac{1+ {\rm Erf}\left(\frac{\Tbara -
          \Delta T (x)}{\sqrt{8 |\Tbara|}} \right)}{1+ {\rm
        Erf}\left(\frac{\Tbarn - \Delta T (x)}{\sqrt{8 |\Tbarn|}}
      \right)}\,.
\end{equation}
The point ($\sin^22\theta_1,|\Delta m_1^2|$) is assigned to the 95\%
CL$_s$ exclusion set if and only if its CL$_s$ value is smaller than
5\%.
\end{enumerate}
In terms of the computing effort, the above CL$_s$ procedure requires the
calculation of $\Delta T (x)$, $\Tbara$, and $\Tbarn$ at each parameter
point in $\Theta \times M$. In comparison, the standard 
Wilks CI method based on $\Delta \chi^2(x)$
in Eq.~\eqref{eq:chisqmin} requires the calculation of $\Delta
\chi^2(x)$ at each parameter point. So the computing cost of
the Gaussian CL$_s$ method is about three times that of the Wilks' CI method. 
In summary,  both methods are easily affordable computationally, but the 
CL$_s$ method is valid under much less restrictive conditions (CD1-CD3).

\section{An Example: Search for Sterile Neutrino }\label{sec:example}

In this section, we introduce an example based on the search for a sterile neutrino.
In this example,  various methods to carry out the CL$_s$ approach and the 
CI approach are compared. 


\subsection{Model Description}\label{sec:toymodel}
In this model, there are two detectors and one neutrino source.
One detector is located at
300 kilo-meters from the neutrino source and is called the near
detector. The other detector is located at 1000 kilo-meters from the neutrino
source and is called the far detector. As shown in
Fig.~\ref{fig:spectrum}, the neutrino energy $E_{\nu}$ covers from 1
GeV to 9 GeV, and a flat (energy independent) neutrino energy spectrum is assumed. 
We further assume the detector can measure the spectrum with 20 energy bins equally 
spaced between 1 GeV and 9 GeV. The
mean number of neutrino events seen by the near (far) detector
without any oscillation is 10 k (0.9 k) per bin. We consider two types
of oscillation measurements: a disappearance measurement with 
oscillation formula 
\begin{equation}\label{eq:disapp}
P_{dis} = 1 - \sin^22\theta \cdot \sin^2 \left( 1.27 \cdot \Delta m^2 \frac{L}{E_{\nu}} \right)\,,
\end{equation}
and an appearance measurement with  oscillation formula 
\begin{equation}\label{eq:app}
P_{app} = \sin^22\theta \cdot \sin^2 \left( 1.27 \cdot \Delta m^2 \frac{L}{E_{\nu}} \right)\,,
\end{equation}
where $\theta$ is the neutrino mixing angle, $\Delta m^2$ is
the neutrino mass squared difference, and $L$ is the distance that 
neutrino travels.

\begin{figure*}
\centering
\includegraphics[width=150mm]{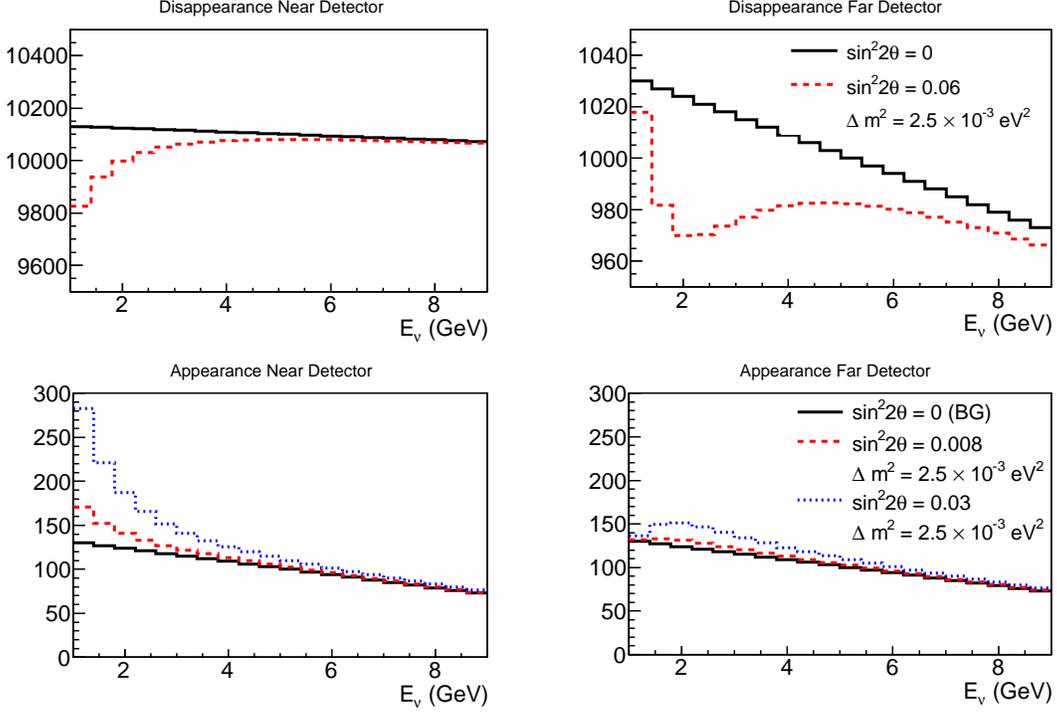}
\caption{(color online) Top panels show the mean number of events seen at the near and 
far detectors in a disappearance experiment. Bottom panels show the mean number of events seen 
at near and far detectors in an appearance experiment. Left and right panels show 
near and far detectors, respectively. See text for more explanations. }
\label{fig:spectrum}
\end{figure*}

We further include a background with a linear dependence on
$E_{\nu}$. The number of background events starts from 130 per bin for
the first bin to 73 per bin for the last (20th) bin. There are three
nuisance parameters, $\epsilon$, $\eta_n$, and $\eta_f$. 
The first one is associated with the detector
efficiency and the neutrino flux, which is assumed to be accurate to
5\%. This uncertainty is assumed to be correlated between the near and
the far detectors. The second and the third nuisance parameters are associated
with the background normalization factors for the near and the far detectors,
respectively. The normalization uncertainty is assumed to be 2\% and
uncorrelated between the two detectors.  Fig.~\ref{fig:spectrum}
shows the expected neutrino spectra. For the disappearance
measurement, we compare the no-oscillation spectrum 
(the null hypothesis $H_0$: $\sin^22\theta = 0$) with an oscillation 
spectrum (an alternative hypothesis $H_1$: $\sin^22\theta = 0.06$ 
at $\Delta m^2 = 2.5\times 10^{-3}$ eV$^2$). For the appearance measurement, we
compare the no-oscillation spectrum (the null hypothesis $H_0$: 
$\sin^22\theta = 0$) with two oscillation spectra (two alternative 
hypotheses $H_1$: $\sin^22\theta = 0.008$ or $\sin^22\theta = 0.03$ at 
$\Delta m^2 = 2.5\times 10^{-3}$ eV$^2$). 
Given a Monte Carlo (MC) sample $N^j_i$, we use the following test statistic based 
on the Poisson likelihood, in line of Eq.~\eqref{eq:poisson_T}: 
\begin{widetext}
\begin{equation}\label{eq:toyT}
  T =  \sum_{j=n,f}\sum_{i=1}^{20} 2\left(\mu^j_i(\epsilon,\eta_j,\sin^22\theta,\Delta m^2) - N^j_i + N^j_i \log \frac{N^j_i}{\mu^j_i(\epsilon,\eta_j,\sin^22\theta,\Delta m^2)} \right) + \frac{\epsilon^2}{0.05^2} +\frac{\eta^2_n}{0.02^2} + \frac{\eta^2_f}{0.02^2}. 
\end{equation}
\end{widetext}
Here, $i$ represents the bin number and ranges from 1 to 20. $j$ labels
the near or the far detector. $\mu_i^j$ is the mean number of
events in $i$-th bin and $j$-th detector.  It depends on the oscillation
parameters: $\sin^22\theta$ and $\Delta m^2$, and the nuisance
parameters: $\epsilon$ for the detector efficiency and neutrino flux, and
$\eta_n$ ($\eta_f$) for the near (far) detector background
normalization factors.

\subsection{The Wilks' CI method vs. the MC CI method}\label{sec:FCvsW}
For the example mentioned above, the Wilks' method is unsuitable for setting CI for the parameter
$\sin^22\theta$ because the conditions required are not satisfied as stated in Sec.~\ref{sec:review}.
In comparison, the computationally intensive MC CI method need to be used 
to set CI in this example. The purpose of this section is to demonstrate the practical difference 
between the two methods. Here, we examine the distribution of the test statistic in 
Eq.~\eqref{eq:toychi} under the hypothesis $H_0$: $\sin^22\theta = 0$, where
the Wilks' method is especially problematic. To implement the MC CI method, 
we generate a large number of MC samples assuming that $\sin^22\theta = 0$. The MC 
samples have statistical fluctuations according to Poisson distributions, and systematic
variations through randomizing the three nuisance parameters according to normal
distributions. While the minimization process in calculating T$^{\min}$
follows the Frequentist's approach, the randomization of the nuisance parameters 
corresponds to a Bayesian integral over the nuisance parameters. It is a common 
hybrid Bayesian/Frequentist approach~\cite{Cousins:1991qz}. As a comparison, we also 
tried a full Frequentist approach as illustrated in Ref.~\cite{full_freq,hybrid_sample}. Results 
are very similar to that of the hybrid approach. In the latter
approach, MCs are generated using the best-fit nuisance parameters obtained in 
analyzing the data under the $\sin^22\theta = 0$ hypothesis.

For each MC sample, we find $T^{\min}$ and $T_{H_0}^{\min}$, where $T^{\min}$ is
the minimum value of $T$ from Eq.~\eqref{eq:toyT} in the 5-dimensional parameter
space of ($\sin^22\theta$, $\Delta m^2$, $\epsilon$, $\eta_n$,
$\eta_f$), and $T_{H_0}^{\min}$ is the minimum value of $T$ under the restriction, 
$\sin^22\theta_{\text{true}} = 0$. 
Then we form the test statistic
\begin{equation} \label{eq:toychi}
\Delta \chi^2 = T_{H_0}^{\min} - T^{\min}\,.
\end{equation}
Fig.~\ref{fig:notchisq} shows the distribution of $\Delta
\chi^2$, which clearly does not follow a Chi-square distribution
with two degrees of freedom. In summary, for this example, the Wilks' method can not be 
used to correctly set CIs based on the test statistic $\Delta \chi^2$. It is possible to 
explore alternative formula than that of the Wilks' method, if one takes the hybrid 
approach in Ref.~\cite{Cousins:1991qz} and finds an analytic approximation to the 
solution of $t_c$ for the equation Prob$(\Delta \chi^2 \leq t_c)\geq c$, where the 
probability is evaluated over the distribution of the nuisance parameters. Otherwise, 
one can always obtain the distribution of $\Delta \chi^2$
through the computationally intensive MC CI method.

\begin{figure}
\centering
\includegraphics[width=75mm]{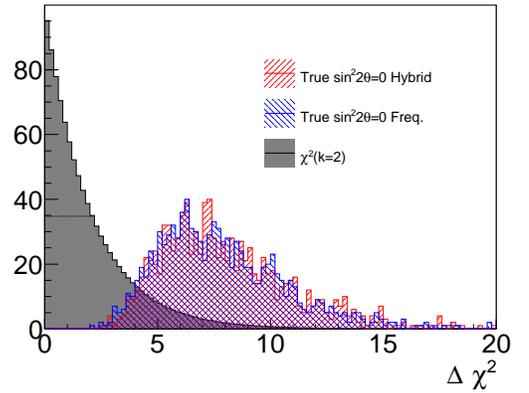}
\caption{(color online) Distributions of $\Delta \chi^2 =
  T(\sin^22\theta = 0) - T_{\min}$ is plotted for MCs with the true
  $\sin^22\theta = 0$.  Distributions based on the hybrid Bayesian/Frequentist 
and full Frequentist approaches are compared to the Chi-square
  distribution with two degrees of freedom.}
\label{fig:notchisq}
\end{figure}

\subsection{Validity of the Gaussian Approximation in the Gaussian 
CL$_s$ method}\label{sec:GCLvsMC}

For the example in the previous section, there is no known
way to set CI without computationally intensive MC simulations. 
This is the main motivation for using the Gaussian CL$_s$ method as an alternative. 
In this section, 
we check how closely does the CL$_s$ test statistic 
$\Delta T = T_{{H_1}}^{\min} -
T_{{H_0}}^{\min}$ follow the normal distribution N($\overline{\Delta
  T}$,4$\overline{\Delta T}$).  Here, $T_{H}^{\min}$ is the value of the test 
statistic $T$ from Eq.~\eqref{eq:toyT} under the hypothesis $H$, 
minimized over the nuisance parameters $(\epsilon,\eta_n,\eta_f)$.
 Fig.~\ref{fig:dis_com} shows
the distribution of $\Delta T$ for the disappearance measurement. The
null hypothesis $H_0$ corresponds to $\sin^22\theta = 0$.  The
alternative hypothesis $H_1$ corresponds to $\sin^22\theta = 0.06$ and
$\Delta m^2 = 2.5\times 10^{-3}$ eV$^2$. The histograms on the left
(right) are made from the MC samples assuming $H_1$ ($H_0$) is
true. We also compare them with the expected normal distribution
N($\overline{\Delta T}$,4$\overline{\Delta T}$) from the
$\overline{\Delta T_{{H_0}}}$ and $\overline{\Delta T_{{H_1}}}$ calculated
from the Asimov data sets. Good agreements are observed.

Similarly, we also check the appearance measurements.  In
Fig.~\ref{fig:ap1_com}, the null hypothesis $H_0$ corresponds to
$\sin^22\theta = 0$,  and the alternative hypothesis $H_1$ corresponds to
$(\sin^22\theta, \Delta m^2) = (0.008, 2.5\times 10^{-3}$eV$^2)$.
In Fig.~\ref{fig:ap2_com}, $H_0$ corresponds to
$\sin^22\theta = 0$, and $H_1$ corresponds to
$(\sin^22\theta, \Delta m^2) = (0.03, 2.5\times 10^{-3}$eV$^2)$.
The agreement between the MCs and expectations in
Fig.~\ref{fig:ap1_com} is slightly worse than that in
Fig.~\ref{fig:dis_com}, but is still reasonably good.  However, the
difference between the MCs and expectations in Fig.~\ref{fig:ap2_com}
becomes large. 
This is because the third regularity condition CD3 ``when the prediction of
two hypotheses (the null hypotheses $H_0$ and the alternative
hypothesis $H_1$ are relatively close or $|\mu_i-\nu_i| << \mu_i \sim
\nu_i$'' is no longer met. 
In a disappearance search, CD3 can be easily
satisfied.  However, this may not be true in an appearance experiment
as the mean number of signal events is zero when $\sin^22\theta = 0$. 
When $H_0$ and  $H_1$ are 
$\sin^22\theta = 0$ and $(\sin^22\theta, \Delta m^2) = (0.008, 2.5\times 10^{-3}$eV$^2)$, respectively, 
CD3 is still reasonably well satisfied with the existence of
backgrounds.  When $H_0$ and $H_1$ are 
$\sin^22\theta = 0$ and $(\sin^22\theta, \Delta m^2) =
(0.03, 2.5\times 10^{-3}$eV$^2)$, respectively, CD3 is 
severely violated. Note that in such situations where $H_0$ and $H_1$ 
are very different, the experimental data is most likely able to exclude 
one hypothesis easily, making it less interesting to carry out such a
 statistical test.
{\bf Nevertheless, we emphasize it is crucial to validate Gaussian approximation with MCs in practice
when any of CD1-CD3 listed in the end of Sec.~\ref{sec:DeltaT} 
are marginally satisfied.}

\begin{figure}
\centering
\includegraphics[width=75mm]{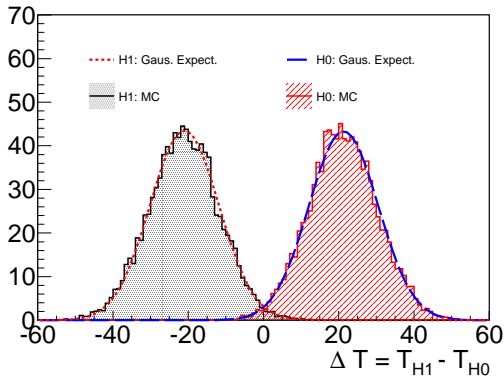}
\caption{(color online) The test statistic $\Delta T = T_{H_1}^{\min} - T_{H_0}^{\min}$ is plotted
for MCs (disappearance) assuming the hypothesis $H_0$ or $H_1$ is true. 
Here, the null hypothesis $H_0$ corresponds to $\sin^22\theta = 0$. The alternative hypothesis 
$H_1$ corresponds to $\sin^22\theta = 0.06$ and $\Delta m^2 = 2.5\times 10^{-3}$ eV$^2$.
}
\label{fig:dis_com}
\end{figure}

\begin{figure}
\centering
\includegraphics[width=75mm]{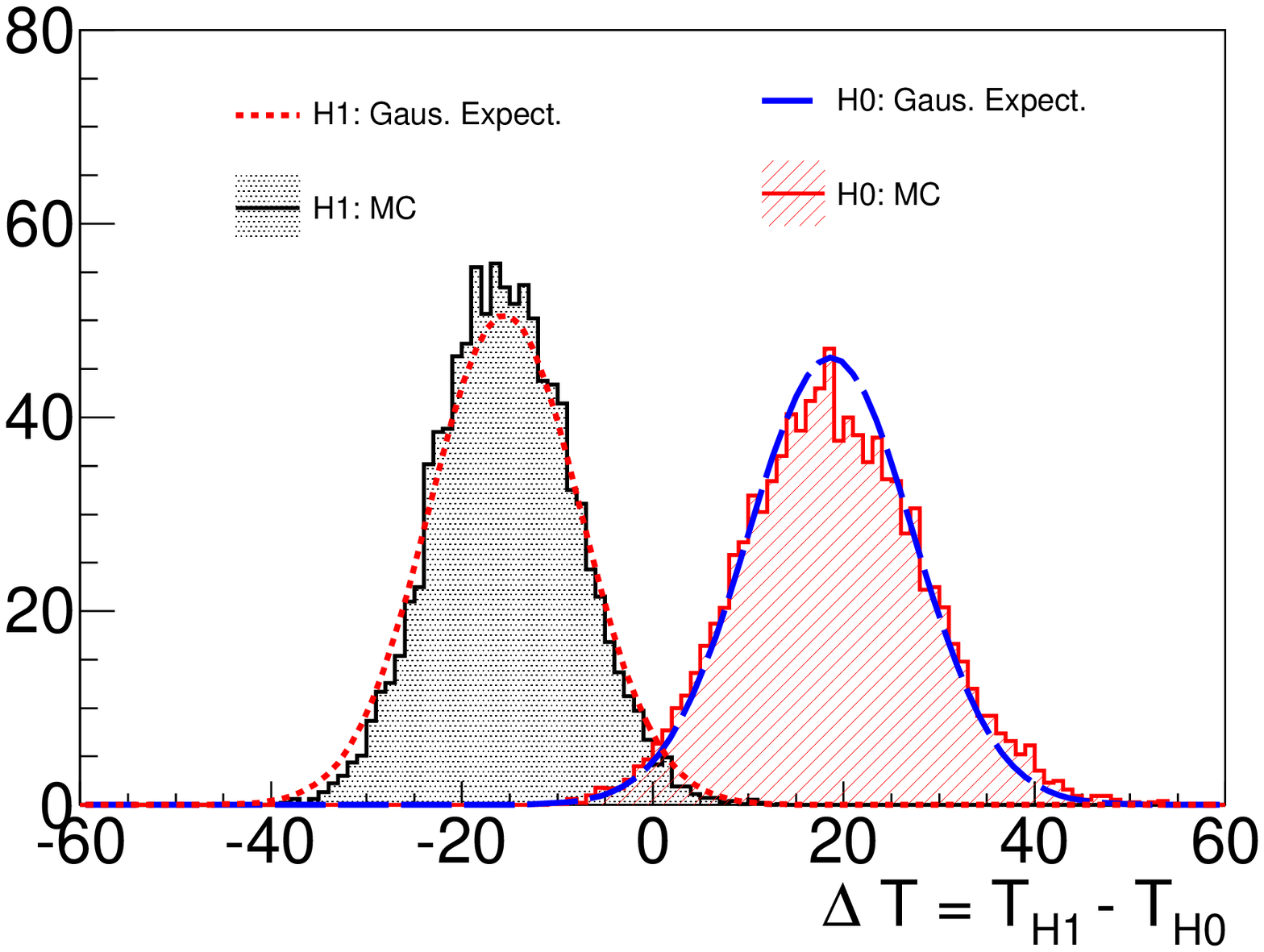}
\caption{(color online) The test statistic $\Delta T = T_{{H_1}}^{\min} - T_{{H_0}}^{\min}$
 is plotted
for MCs (appearance) assuming the hypothesis $H_0$ or $H_1$ is true. 
Here, the null hypothesis $H_0$ corresponds to $\sin^22\theta = 0$. The alternative hypothesis 
$H_1$ corresponds to $\sin^22\theta = 0.008$ and $\Delta m^2 = 2.5\times 10^{-3}$ eV$^2$.}
\label{fig:ap1_com}
\end{figure}

\begin{figure}
\centering
\includegraphics[width=75mm]{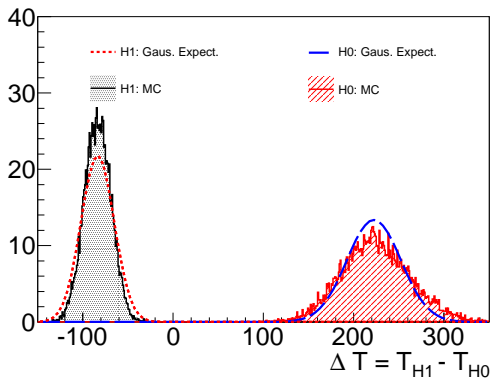}
\caption{(color online) The test statistic $\Delta T = T_{{H_1}}^{\min} - T_{{H_0}}^{\min}$ is plotted
for MCs (appearance) assuming the hypothesis $H_0$ or $H_1$ is true. 
Here, the null hypothesis $H_0$ corresponds to $\sin^22\theta = 0$. The alternative hypothesis 
$H_1$ corresponds to $\sin^22\theta = 0.03$ and $\Delta m^2 = 2.5\times 10^{-3}$ eV$^2$.}
\label{fig:ap2_com}
\end{figure}

\subsection{The Gaussian CL$_s$ method vs. the $\Delta T$-based CI method, both based on $\Delta T = T_{{H_1}}^{\min} - T_{{H_0}}^{\min}$} \label{sec:CLvsCI}

\begin{figure}
\centering
\includegraphics[width=75mm]{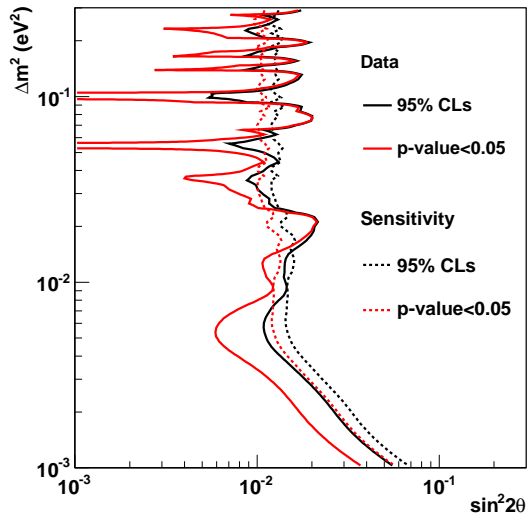}
\caption{(color online) 
Comparison of the exclusion sets determined by the Gaussian CL$_s$ method vs. the
CI, both using the 
test statistic $\Delta T = T_{{H_1}}^{\min} - T_{{H_0}}^{\min}$. The true value of 
$\sin^22\theta$ is $0$. For the CI (CL$_s$) method, the right 
side of the red (black) line has a p-value (CL$_s$ value) smaller than 
$0.05$. The sensitivity curves are generated from a large number of Monte Carlo 
samples. At each $\Delta m^2$, 50\% (50\%) of 
MC samples will have a better (worse) exclusion limit than the sensitivity curve.}
\label{fig:pvvCLs}
\end{figure}

Recall that the CL$_s$ approach is based on the test statistic 
$\Delta T = T_{{H_1}}^{\min} - T_{{H_0}}^{\min}$. Although typical methods to form CIs use a 
different type of test statistic, namely $\Delta \chi ^2$ shown in the previous sections, 
one can in principle also set CIs based on $\Delta T$, which we refer to as the $\Delta T$-based CI method. 
Below, we use our example to compare the exclusion sets obtained from the Gaussian CL$_s$ method and the CIs obtained from the $\Delta T$-based CI method. For the former method, the exclusion set consists of parameter values that correspond to CL$_s$ values, specifically $(1-p_1)/(1-p_0)$, lower than 0.05; and for the latter method, the CI consists of parameter values that correspond to p-values, specifically $(1-p_1)$, over 0.05. The results are summarized in Fig.~\ref{fig:pvvCLs}. 
In the example, the true $\sin^22\theta$ is assumed to be zero. 
The sensitivity of the Gaussian CL$_s$ method is slightly worse 
than that of $\Delta T$-based CI method, because the CL$_s$ value is by construction 
larger than the p-value used in the CI method. 
Despite the slightly worse sensitivity, the CL$_s$ produces smoother contours that 
agree better with intuition (not excluding hypotheses that are close to the null hypothesis) 
than the CI do.
As shown in Fig.~\ref{fig:pvvCLs}, the region 
that correspond to $\Delta m^2 \sim 5.5 \times10^{-2}$ eV$^2$ and 
$\sin^22\theta < 0.01$ (also $\Delta m^2 \sim 0.1$ eV$^2$ and 
$\sin^22\theta < 0.01$) is excluded from the 95\% CI. This is inconsistent with intuition as the 
expected spectrum for small $\sin^22\theta$ values should be very similar 
to that of $\sin^22\theta = 0$, and we do not expect to exclude regions 
with small $\sin^22\theta$ values. This phenomenon can be understood 
as follows. With the test statistic $\Delta T$, we compare 
two hypotheses each time. Therefore, even when 
the two hypotheses are very similar, the chance of excluding one hypothesis with 
CI can still be large as illustrated in Fig.~\ref{fig:CLs2}. 
As we explained in Sec.~\ref{sec:CLs_1}, the definition of the CL$_s$ value
avoid this problem, giving it an advantage over the traditional 
CI when test statistic $\Delta T$ is used.

\subsection{The Gaussian CL$_s$ method vs. the MC CI method vs.  the Raster-Scan MC CI 
method}

The statistical interpretation of (the complement of) exclusion sets obtained using the CL$_s$ method 
is distinct from that of CIs. Indeed, if an exclusion contour based on thresholding the CL$_s$ value at $\alpha$ is used to set a CI, its coverage probability will be over $1-\alpha$.   
Nevertheless, it is still interesting to compare these two kinds of sets in 
specific physics problems, as seen in many literatures 
(for example, Ref.~\cite{Read:2002hq} and Ref.~\cite{cs11}). Below, we perform such a
comparison under the set up of our example. Besides the CL$_s$ approach and 
the standard $\Delta \chi^2$-based CI approach, we also include results from another commonly used 
approach, the so-called raster-scan CI approach. In short, this approach scans 
through all values of the parameter $|\Delta m^2|$, and at each fixed $|\Delta m^2|$, 
it checks the compatibility of the other parameter $\sin^22\theta$ to the data. 
A most popular method to carry out the raster scan approach uses the 
following statistic at each $|\Delta m^2|$,
\begin{widetext}
\begin{equation}
\Delta \chi^2_{RS}(\sin^22\theta,|\Delta m^2|;x) \equiv 
\chi^2(\sin^22\theta,|\Delta m^2|,\eta_{\min}(\sin^22\theta,|\Delta m^2|); x) - \chi^2_{RS\min}(|\Delta m^2|;x)\,,
\end{equation}
\end{widetext}
which is similar to the $\Delta \chi^2$ statistic given in Eq.~\eqref{eq:chisqmin} 
except that the global minimum $\chi^2_{\min}(x)$ is replaced by the restricted 
minimum $\chi^2_{RS\min}(|\Delta m^2|;x)=\min_{\sin^22\theta,\eta}\chi^2(\sin^22\theta, |\Delta m^2|,\eta; x)$. 
Given a fixed value of $|\Delta m^2|$, the raster scan method examines all 
$\sin^22\theta_1$ values, one at a time, and test the hypothesis $H_0$: 
$\sin^22\theta=\sin^22\theta_1$ 
based on the statistic $\Delta \chi^2_{RS}(\sin^22\theta_1,|\Delta m^2|;x)$. 
The raster scan approach is usually considered less ideal than the standard CI approach 
that we described in Sec.~\ref{sec:review}, mainly because it does not make comparisons 
between hypotheses that have different values of $|\Delta m^2|$ and hence can not 
distinguish a likely value of this parameter from an unlikely one~\citep{Feldman:1997qc}. 
In addition, according to Eq.~\eqref{eq:disapp} 
and Eq.~\eqref{eq:app}, when $\sin^22\theta=0$, any value of $|\Delta m^2|$ will 
result in the same model, namely, the Standard Model. As a consequence, 
the Standard Model is tested many times against different new physics hypotheses 
that correspond to different values of $|\Delta m^2|$, which makes it difficult to 
interpret the test results. Whereas in the standard CI approach, any model is 
tested only once.  
Similar to the case of $\Delta \chi^2$, the regularity condition of the Wilks' theorem
would also break for $\Delta \chi^2_{RS}$ when the true $\sin^22\theta=0$. Therefore,
Monte Carlo is usually necessary to obtain the distribution of $\Delta \chi^2_{RS}$ 
to compute CIs using the raster scan.  

Fig.~\ref{fig:FCvCLs_sen} compares the sensitivity of the Gaussian CL$_s$ method, 
the standard MC CI method, and the raster-scan MC CI method. 
We assumed that the true value of $\sin^22\theta$ is $0$ in generating
MC. At each $\Delta m^2$, 50\% (50\%) of 
MC samples will have a better (worse) exclusion limit than the sensitivity curve.
Sensitivities from these three methods are similar. 
The sensitivity of the 95\% exclusion set from the Gaussian CL$_s$ method is 
slightly better than that of the 95\% CI from the MC CI method, and is in fact
close to that of the 90\% CI from the MC CI method for this setup. This is expected, 
since the test statistic $\Delta T = T_{{H_1}}^{\min} - T_{{H_0}}^{\min}$  used in the 
Gaussian CL$_s$ method is designed to focus on the differences between the new 
physics hypotheses ($H_1: \sin^22\theta =\sin^22\theta_1$ for some 
$\sin^22\theta_1  > 0$), with the Standard Model ($H_0: \sin^22\theta = 0$). 
Therefore, when the true value of $\sin^22\theta$ is $0$, the Gaussian CL$_s$ method has 
larger power to exclude new physics hypotheses than the MC CI method. 
In addition, the 95\% sensitivity from the Gaussian CL$_s$
method is very close to that from the raster-scan MC CI method. This is actually 
a coincidence, since the CL$_s$ method and the raster-scan method use 
the ratios of p-values and p-value to set limits, respectively. The left panel 
of Fig.~\ref{fig:FCvCLs_com} shows the difference between the CL$_s$ sensitivity 
(CL$_s$ value) and raster-scan sensitivity (p-value) at each parameter point. 
The difference is rather large at small values of $\sin^22\theta$, which indicates that
the similarity of the 95\% lines between the CL$_s$ method and the raster-scan method 
is a coincidence. The right panel of Fig.~\ref{fig:FCvCLs_com} shows the sensitivity
difference between the raster-scan MC CI method and the standard MC CI method. The 
sensitivity are also different, since the choice of test statistics are different 
between the raster-scan MC CI method and the standard MC CI method.

\begin{figure}
\centering
\includegraphics[width=75mm]{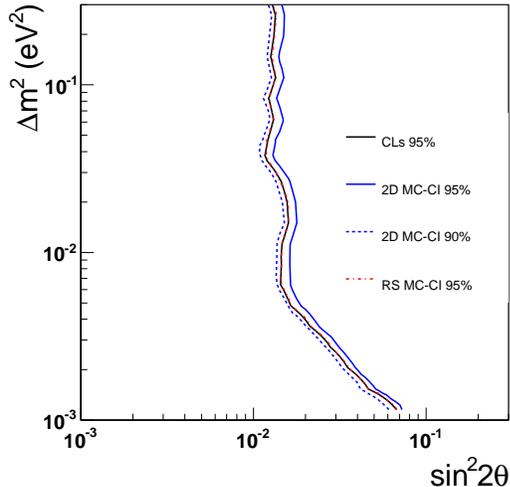}
\caption{(color online) 
Comparison of the sensitivity of the 95\% Gaussian CL$_s$ method vs. 
that of the 95\% and the 90\% MC CI method. We also added the 95\% 
raster-scan MC CI for comparison. The true value of $\sin^22\theta$ is $0$. 
See texts for more explanations.}
\label{fig:FCvCLs_sen}
\end{figure}

\begin{figure*}
\centering
\includegraphics[width=150mm]{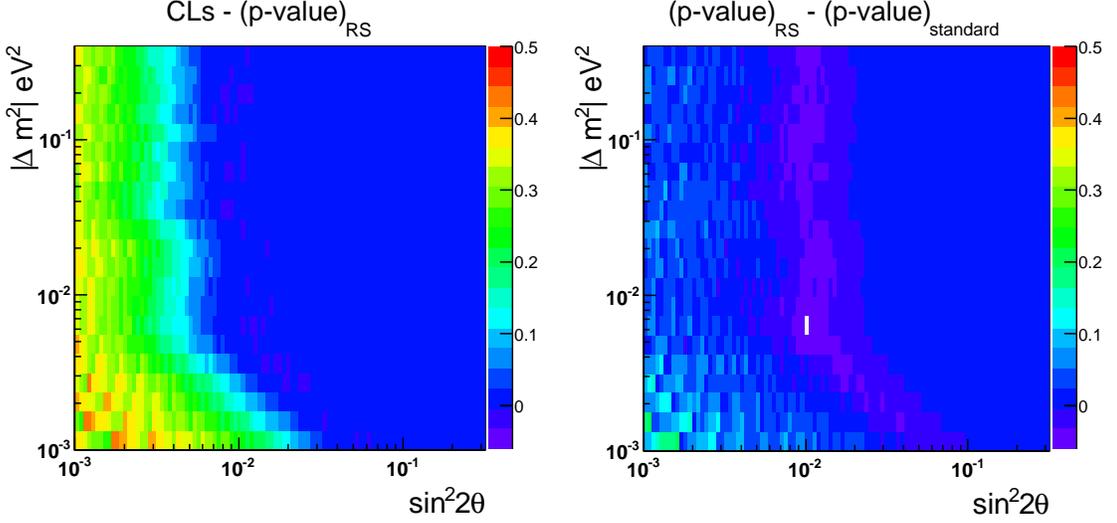}
\caption{(color online) Left panel: the difference between the CL$_s$ sensitivity
and the raster-scan sensitivity is shown at each parameter point. Right panel:
the difference between the raster-scan sensitivity and the standard CI sensitivity
is shown at each parameter point. }
\label{fig:FCvCLs_com}
\end{figure*}

When the new physics is indeed true, the standard MC CI method has  
clear advantage in constraining the parameter space over the other two methods. This is shown in
Fig.~\ref{fig:FCvCLs}. The MC sample is generated with
$\sin^22\theta_{\text{true}} = 0.1$ and $\Delta m^2_{\text{true}} =
2.5\times10^{-3}$ eV$^2$ with statistical fluctuations and systematic
variations. The 90\% CI of the MC CI method were able to identify a small region
close to the true value. In comparison, 
the 95\% CL$_s$ limit successfully excluded the region on the right, but failed 
to exclude regions (on the left of line) far away from the true value.
This again is due to the choice of the test statistic ($\Delta T$ in the Gaussian 
CL$_s$ method vs. $\Delta \chi^2$ in the MC CI method).  The proposed
test statistic $\Delta T$ focuses on the difference between the 
new physics hypothesis and the Standard Model, while the
test statistic $\Delta \chi^2$ takes into account all the 
likely values of $(\sin^22\theta, \Delta m^2)$. Therefore, we confirm the
conclusion from Ref.~\cite{Read:2002hq}: ``the CL$_s$ technique for
setting limits is appropriate for determining exclusion sets
while the determination of CIs advocated by the Feldman-Cousins method is more 
appropriate for treating established signals''. For comparison, we also display 
the 95\% raster-scan MC CI. Since the raster scan can not distinguish likely and unlikely values of
the parameter $|\Delta m^2|$, it also failed to exclude some regions of the parameter
space that are far away from the truth.

\begin{figure}
\centering
\includegraphics[width=75mm]{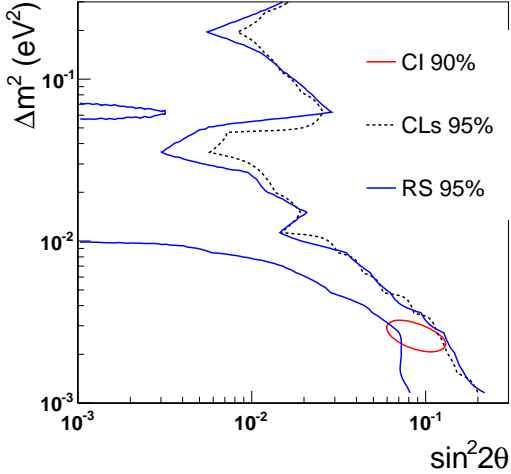}
\caption{(color online) Comparison of the 90\% CIs 
from the MC method vs. 95\% Gaussian CL$_s$ limit
for a MC sample generated with $\sin^22\theta = 0.1$ and 
$\Delta m^2 = 2.5\times10^{-3}$ eV$^2$. The 95\% raster-scan MC CI
is shown for comparison. See texts for more explanations.}
\label{fig:FCvCLs}
\end{figure}

\section{Discussion}\label{sec:discussion}


%

In order to use the Gaussian CL$_s$ method, it is important that the CD1-CD3 listed 
in the end of Sec.~\ref{sec:DeltaT} are met. The first condition CD1 is continuity of the 
parameter space for the nuisance parameters, 
under both the null and the alternative hypotheses. This requirement is easier to achieve
compared to the first regularity condition required by the Wilks' theorem,
since it concerns the nuisance parameters only, not the parameters of interest, 
($\sin^22\theta,|\Delta m^2|$). The second condition CD2 concerns large enough data size, 
which is also easier to reach compared to that required by the 
Wilks' theorem. This is because in the Gaussian CL$_s$ method tests a simpler pair of hypotheses, in which the values of ($\sin^22\theta,|\Delta m^2|$) 
are fixed, and one automatically avoids the situation shown in Fig.~\ref{fig:illustration}b 
that involves minimization over a large range of $|\Delta m^2|$ in calculating the 
test statistic. The third condition CD3 is that the difference between the predictions
of two hypotheses is small comparing to the predictions themselves. 
In searching for new physics with precision measurements, the signal from
the Standard Model is usually much larger than the potential signal from new 
physics. Therefore, CD3 is generally satisfied.
In the case that CD3 is violated or marginally satisfied 
(see Fig.~\ref{fig:ap2_com}), one should use Monte Carlo simulation to 
derive the distribution of the test statistic. 


Similar to the Wilks' 
CI method based on the test statistic $\Delta \chi^2$
and predefined constants, the Gaussian CL$_s$ method also allows easy combination 
of multiple independent experimental results that probe the same parameter space. 
The CL$_s$ value at each alternative hypothesis 
$H_1$ from experiments ($k=1,2,...m$) can be calculated with
\begin{eqnarray}
\Delta T 
(x) &=& \sum^{m}_{k=1} \Delta T(x_k)
, \nonumber \\
\overline{\Delta T(x^{\text{Asimov}}_{H_1})} &=& \sum^{m}_{k=1} \overline{\Delta T(x^{Asimov~k}_{H_1})},\nonumber \\
\overline{\Delta T(x^{\text{Asimov}}_{H_0})} &=& \sum^{m}_{k=1} \overline{\Delta T(x^{Asimov~k}_{H_0})}, \nonumber
\end{eqnarray}
with $x = \sum^{m}_{k=1}$ is the combined data from all experiments. 
This can be easily achieved if each experiment releases their maps of
$T_{x_k}$,
$\overline{\Delta T(x^{\text{Asimov}}_{H_1})}$, and $\overline{\Delta T(x^{\text{Asimov}}_{H_0})}$. In practice, the main challenge
in combining multiple experiment results arise from the potential 
correlation among different 
experiments and requires careful examinations.

So far, we have argued that, in practice, the CL$_s$ method is often simple to use and allows easy combination of multiple results. But it is important to remind the readers that the CL$_s$ is a limited method that aims at setting boundaries only. The CL$_s$ based on $\Delta T$ does not directly address the question ``do we see new physics or not'', nor does it provide estimate of parameters. To help address the first question, we recommend reporting the p-value based on 
the test statistic $\Delta \chi^2$ assuming the Standard Model is true, in addition to the obtained exclusion sets. To address the second question, the standard CI approach is needed. Indeed, the standard CI approach is the preferred approach to take whenever one can afford to carry it out correctly, because the standard CI approach is a unified approach to set limits in the absence of new physics signals and to estimate parameters after the discovery of new physics \cite{Feldman:1997qc}.

\section{Summary}\label{sec:summary}

In this paper, we describe a method to present results in searching 
for new physics in a continuous parameter space. 
This method takes the CL$_s$ approach to obtain exclusion sets for parameters.
Specifically, the method consists of testing many pairs of hypotheses. Each time, 
a new physics model is tested against the Standard Model using the log-likelihood ratio test statistic, 
or certain variations 
of it, denoted by $\Delta T$. We provide a mathematical proof 
to show that the distribution of $\Delta T$ follows a Gaussian 
distribution at large data limit under either hypothesis, when the two hypotheses are relatively 
close. This result allows a simple alternative to the computationally intensive Monte Carlo method to calculate CL$_s$ values,
and thus to set exclusion limits in one or multiple dimensional parameter spaces. This method can also be 
used to conveniently combine results from multiple experiments.

\section{Acknowledgments}
We would like to thank Wei Wang for helpful discussions. 
This material is based upon work supported by the National Science 
Foundation and the U.S. Department of Energy, Office of Science, 
Office of High Energy Physics, Early Career Research program 
under contract number DE-SC0012704.

\section*{Appendices}

\subsection{A few basic properties of the fitted models under $H_0$ and $H_1$}\label{app:hats}
Suppose $H_0$ is the correct hypothesis, that is, the data $X$ came from $H_0$. Having observed the data, the best fitting models under $H_0$ and $H_1$ have estimated nuisance parameters $\hnutn$ and $\hnuta$ respectively, as defined in Sec.~\ref{sec:CLs}. The corresponding per unit mean counts are denoted $\hpi$ and $\htau$ respectively. 

We show below that there is a unique limit of $\hnuta$ as the data size increases, and that it leads to the model $\nu(\nuta)$ that is the closest model under $H_1$
 to the true model $\muT$ under a certain criteria.
 Indeed, let $t_m(\nuta)=\chi^2_{\IH}(\nuta)/m=\sum_i\frac{\left(\p_i-\tau_i(\nuta)\right)^2}{\tau_i(\nuta)}$, 
and let $t(\nuta)=\sum_i\frac{\left(\pi^0_i-\tau_i(\nuta)\right)^2}{\tau_i(\nuta)}$. Since $\p$ converges almost surely (a.s.) to $\piT$ as $m$ increases, we have $t_m(\nuta)$ converges a.s. to $t_m(\nuta)$. Then under regularity conditions, such as $t_m$ being twice differentiable and convex in $\nuta$, 
$\hnuta^m=\arg\min_\nuta t_m(\nuta)$ also converges a.s. to $\arg\min_\nuta t(\nuta)$  as $m$ increases. By denoting the limit of $\hnuta^m$ by $\nuta^0$, we have $m t(\nuta^0)=\min_\nuta m t(\nuta)$. That is, $\nuta^0$ and $\nu^0:=\nu(\nuta^0)$ are such that $\sum_{i=1}^\n\frac{(\mu^0_i-\nu^0_i)^2}{\nu^0_i}=\min_{\nu}\sum_{i=1}^\n\frac{(\mu^0_i-\nu_i)^2}{\nu_i}\,$.

We list a few more properties that are useful in the proof of Lemma~1 and the proof of the result in Eq.~\eqref{eq:proof}. It is well-known that $\hnutn-\nutnT$ and $\hpi-\piT$ are both of order $O_p(m^{-\frac{1}{2}})$. 
And $\hnuta-\nutaT$ and $\htau-\tauT$ are also both of order $O_p(m^{-\frac{1}{2}})$ according to Ref.~\cite{whit:1982}.

\begin{widetext}

\subsection{Lemma~\ref{lemma:htau}}
 \begin{lemma}\label{lemma:htau}
Assuming [A0], 
we have
\begin{equation} \sqrt{m}  \begin{pmatrix}
    \p - \pi^0\\[1mm]
    \htau-\tau^0
  \end{pmatrix} 
  = 
   \begin{pmatrix}
    I\\[1mm]
   E^*
  \end{pmatrix} 
   \sqrt{m}\left(\p-\pi^0\right)
   +O_p(m^{-\frac{1}{2}})
   \,,
   \end{equation}   where  \[E^*=B^*\left({B^*}^T \dg\left\{\frac{(\pi^0)^2}{(\tau^0)^3}\right\} B^*\right)^{-1}{B^*}^T \dg\left\{\frac{\pi^0}{(\tau^0)^2}\right\}\;\; \text{and}\;\; B^*_{\n\times q^*}=\frac{\partial \tau^0}{\partial \nuta}\,.\] Further,
   \begin{equation}\label{eq:lemma2}{B^*}^T \dg\left\{\frac{(\pi^0+\tau^0)}{(\tau^0)^2}\right\}\delta=0\,.    \end{equation}
 \end{lemma}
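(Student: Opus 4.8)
The plan is to dispatch the two claims separately, since Eq.~\eqref{eq:lemma2} is an exact identity while the displayed expansion is a first-order asymptotic statement. For Eq.~\eqref{eq:lemma2} I would simply recognize it as the stationarity condition that defines $\nutaT$. By Appendix~A, $\nutaT$ minimizes $t(\nuta)=\sum_i(\piT_i-\tau_i(\nuta))^2/\tau_i(\nuta)$, so $\nabla_\nuta t(\nutaT)=\0$. Differentiating the summand gives $\partial_{\tau_i}\big[(\piT_i-\tau_i)^2/\tau_i\big]=-(\piT_i^2-\tau_i^2)/\tau_i^2=-(\piT_i-\tau_i)(\piT_i+\tau_i)/\tau_i^2$, and applying the chain rule with $\B=\partial\tauT/\partial\nuta$ turns $\nabla_\nuta t(\nutaT)=\0$ directly into ${B^*}^T\dg\{(\piT+\tauT)/(\tauT)^2\}\delta=\0$. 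No asymptotics are needed; this is just the normal equation of the projection of $\piT$ onto the alternative family.

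For the main expansion I would treat $\hnuta$ as the $M$-estimator solving the estimating equation $S(\nuta;\p):=\nabla_\nuta t_m(\nuta)=\0$, with $t_m(\nuta)=\sum_i(\p_i-\tau_i)^2/\tau_i$; note that $S(\nutaT;\piT)=\0$ is precisely the identity just established. Using $\hnuta-\nutaT=O_p(m^{-1/2})$ and $\p-\piT=O_p(m^{-1/2})$ (Appendix~A and Ref.~\cite{whit:1982}), I would Taylor-expand $S$ about $(\nutaT,\piT)$ and invert,
\[\0=S(\hnuta;\p)=J(\hnuta-\nutaT)+C(\p-\piT)+O_p(m^{-1}),\]
where $J=\partial S/\partial\nuta$ and $C=\partial S/\partial\p$ are evaluated at $(\nutaT,\piT)$. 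Solving yields $\hnuta-\nutaT=-J^{-1}C(\p-\piT)+O_p(m^{-1})$, and pushing this through $\htau-\tauT=\B(\hnuta-\nutaT)+O_p(m^{-1})$ produces the lower block of the lemma with $E^*=-\B J^{-1}C$, the $O_p(m^{-1})$ remainders becoming $O_p(m^{-1/2})$ after multiplication by $\sqrt m$; the upper block is the tautology $\sqrt m(\p-\piT)=I\sqrt m(\p-\piT)$.

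A direct computation gives $C=-2{B^*}^T\dg\{\piT/(\tauT)^2\}$ and $J=2{B^*}^T\dg\{(\piT)^2/(\tauT)^3\}\B-R$, where $R=\sum_i \delta_i(\piT_i+\tauT_i)(\tauT_i)^{-2}H_i$ collects the curvature $H_i=\partial^2\tau_i/\partial\nuta\,\partial\nuta^T$ of the mean-count models. The decisive point is the term $R$: under [A0] the weight $\delta_i=O(1)$, so $R$ is $O(1)$ and does \emph{not} self-annihilate, in sharp contrast to the [A1] regime where $\delta_i=O(m^{-1/2})$ would suppress it. I would handle this by linearizing $\tau(\nuta)$ about $\nutaT$ (equivalently, taking the mean-count models to be affine in the nuisance parameters near $\nutaT$), which removes $R$ and leaves $J=2{B^*}^T\dg\{(\piT)^2/(\tauT)^3\}\B$; substituting this and $C$ into $E^*=-\B J^{-1}C$ cancels the factors of two and reproduces exactly the expression for $E^*$ stated in the lemma. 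I expect justifying this last simplification to be the main obstacle, since it is precisely where the [A0] analysis must depart from the classical [A1] derivation, and where the regularity hypotheses (twice differentiability and convexity of $t_m$ near $\nutaT$, invertibility of $J$) genuinely enter.
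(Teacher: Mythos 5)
Your derivation of Eq.~\eqref{eq:lemma2} is a cleaner route than the paper's: the paper writes the finite-$m$ stationarity equation for $\hnuta$, expands it, and lets $m\to\infty$ so that only the term $\B^T\dg\{(\piT+\tauT)/(\tauT)^2\}\delta$ survives, whereas you read the identity off directly as the first-order condition $\nabla_\nuta t(\nutaT)=0$ of the limiting minimization established in Appendix~A; the two arguments are equivalent, yours being more economical. For the main expansion, your implicit-function/M-estimation formulation is in essence the paper's own strategy (expand the estimating equation around $(\nutaT,\piT)$ and solve for $\sqrt{m}(\hnuta-\nutaT)$), just organized in sandwich form, and your computations of $C$ and of the leading block of $J$ reproduce the paper's Eq.~\eqref{eq:heta}--\eqref{eq:htau}.

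The substantive point is your curvature matrix $R$. You are right that under [A0] it is $O(1)$ and does not self-annihilate, and in fact the paper's own proof silently discards it: when expanding the lhs of Eq.~\eqref{eq:lrhs}, the paper replaces $\partial\htau_i/\partial\nuta_j$ by $\partial\tauT_i/\partial\nuta_j$ even where this factor multiplies $\delta_i=O(1)$, thereby dropping
\[
\sum_k(\hnuta_k-\nutaT_k)\sum_{i=1}^\n\frac{(\piT_i+\tauT_i)\,\delta_i}{(\tauT_i)^2}\,\frac{\partial^2\tauT_i}{\partial\nuta_j\,\partial\nuta_k}\,,
\]
which is $O_p(m^{-\frac{1}{2}})$ --- the same order as the terms retained in Eq.~\eqref{eq:lr}. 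So the ``main obstacle'' you flagged is not a defect of your approach but a genuine gap in the paper's proof: the displayed formula for $E^*$ is exact only when the mean counts are affine in the nuisance parameters (your linearization, which does hold in typical applications where nuisance parameters are normalization factors), or when $R$ is negligible, e.g.\ under [A1] or under condition CD3/Eq.~\eqref{eq:cond} where $\delta_i/\tauT_i\ll1$ suppresses $R$ relative to $2\B^T\dg\{(\piT)^2/(\tauT)^3\}\B$. Two observations would strengthen your write-up. First, your corrected slope $E^*=-\B J^{-1}C$ (with $J$ containing $-R$) still has $\B$ as its leftmost factor, and the only properties of the Lemma used downstream in proving Eq.~\eqref{eq:proof} are this factorization together with Eq.~\eqref{eq:lemma2}, which annihilates the cross term $\delta^T\dg\{(\piT+\tauT)/(\tauT)^2\}\B(\cdots)$; hence the paper's main Gaussian-approximation result survives the correction intact. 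Second, it follows that your affine assumption is needed only to recover the Lemma's literal expression for $E^*$, not for anything the Lemma is actually used for.
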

\begin{proof}
By definition, $\hnuta$ is such that \[\frac{\partial \chi^2_{\IH}(\hnuta)}{\partial  \nuta_j}=0\;\;\text{for $j=1,\cdots,q^*.$}\]
That is,
\[2\sum_{i=1}^\n  \frac{{\Ni}+\hnu_i}{\hnu_i^2}\frac{\partial \hnu_i}{\partial\nuta_j}({\Ni}-\hnu_i)=0 \]
\[\sum_{i=1}^\n  \frac{{\Ni}+\hnu_i}{\hnu_i^2}\frac{\partial \hnu_i}{\partial\nuta_j}({\Ni}-\nu^0_i)
=\sum_{i=1}^\n  \frac{{\Ni}+\hnu_i}{\hnu_i^2}\frac{\partial \hnu_i}{\partial\nuta_j}(\hnu_i-\nu^0_i)\]
\begin{equation}\label{eq:lrhs}
\sum_{i=1}^\n  \frac{\p_i+\htau_i}{\htau_i^2}\frac{\partial \htau_i}{\partial\nuta_j}(\p_i-\pi^0_i+\pi^0_i-\tau^0_i)
=\sum_{i=1}^\n  \frac{\p_i+\htau_i}{\htau_i^2}\frac{\partial \htau_i}{\nuta_j}(\htau_i- \tau^0_i)\,.
\end{equation}
Note by delta's method
\[ \htau_i= \tau^0_i +\sum_k \frac{\partial \tau^0_i}{\partial\nuta_k}(\hnuta_k-\nuta^0_k)+ O_p(m^{-1})\;\;\text{and}\;\;\;\frac{\partial \htau_i}{\nuta_j}=\frac{\partial \tau^0_i}{\nuta_j}+\sum_k \frac{\partial^2 \tau^0_i}{\partial\nuta_j \partial\nuta_k}(\hnuta_k-\nuta^0_k)+O_p(m^{-1})\,, \]
and
\[\frac{p_i+\htau_i}{(\htau_i)^2}-\frac{\pi^0_i+\tau^0_i}{(\tau^0_i)^2}
=\frac{1}{(\tau^0_i)^2}(p_i-\pi^0_i)-\frac{1}{(\tau^0_i)^2}(2\frac{\pi^0_i}{\tau^0_i}+1)(\htau_i-\tau^0_i)+O_p(m^{-1})\,.\]
Hence, the lhs of Eq.~\eqref{eq:lrhs} becomes
\[
\begin{split}
 lhs&=\sum_{i=1}^\n   \left(\frac{\pi^0_i+\tau^0_i}{(\tau^0_i)^2}+\frac{1}{(\tau^0_i)^2}(p_i-\pi^0_i)-\frac{1}{(\tau^0_i)^2}(2\frac{\pi^0_i}{\tau^0_i}+1)(\htau_i-\tau^0_i)+O_p(m^{-1})\right)(\frac{\partial \tau^0_i}{\partial\nuta_j})(\p_i-\pi^0_i+\delta_i)\\
&=\sum_{i=1}^\n   \left(\frac{\pi^0_i+\tau^0_i}{(\tau^0_i)^2}\right)\frac{\partial \tau^0_i}{\partial\nuta_j}\delta_i+\sum_{i=1}^\n  \left[ \frac{\pi^0_i+\tau^0_i}{(\tau^0_i)^2}+\frac{
\delta_i}{(\tau^0_i)^2}\right]\frac{\partial \tau^0_i}{\partial\nuta_j}(\p_i-\pi^0_i)-\sum_{i=1}^\n \frac{1}{(\tau^0_i)^2}(2\frac{\pi^0_i}{\tau^0_i}+1)(\htau_i-\tau^0_i)(\frac{\partial \tau^0_i}{\partial\nuta_j})\delta_i+O_p(m^{-1})\\
&=\sum_{i=1}^\n   \left(\frac{\pi^0_i+\tau^0_i}{(\tau^0_i)^2}\right)\frac{\partial \tau^0_i}{\partial\nuta_j}\delta_i+\sum_{i=1}^\n  \frac{2\pi^0_i}{(\tau^0_i)^2}\frac{\partial \tau^0_i}{\partial\nuta_j}(\p_i-\pi^0_i)-\sum_{i=1}^\n \frac{2\pi^0_i+\tau^0_i}{(\tau^0_i)^3}\delta_i (\frac{\partial \tau^0_i}{\partial\nuta_j})(\htau_i-\tau^0_i)+O_p(m^{-1})\\
&=\sum_{i=1}^\n   \left(\frac{\pi^0_i+\tau^0_i}{(\tau^0_i)^2}\right)\frac{\partial \tau^0_i}{\partial\nuta_j}\delta_i+\sum_{i=1}^\n  \frac{2\pi^0_i}{(\tau^0_i)^2}\frac{\partial \tau^0_i}{\partial\nuta_j}(\p_i-\pi^0_i)-\sum_{i=1}^\n \frac{2\pi^0_i+\tau^0_i}{(\tau^0_i)^3}\delta_i (\frac{\partial \tau^0_i}{\partial\nuta_j})(\sum_k \frac{\partial \tau^0_i}{\partial\nuta_k}(\hnuta_k-\nuta^0_k)+ O_p(m^{-1}))+O_p(m^{-1})\\
&=\sum_{i=1}^\n   \left(\frac{\pi^0_i+\tau^0_i}{(\tau^0_i)^2}\right)\frac{\partial \tau^0_i}{\partial\nuta_j}\delta_i+\sum_{i=1}^\n  \frac{2\pi^0_i}{(\tau^0_i)^2}\frac{\partial \tau^0_i}{\partial\nuta_j}(\p_i-\pi^0_i)-\sum_k (\hnuta_k-\nuta^0_k)\sum_{i=1}^\n \frac{2\pi^0_i+\tau^0_i}{(\tau^0_i)^3}\delta_i \frac{\partial \tau^0_i}{\partial\nuta_j} \frac{\partial \tau^0_i}{\partial\nuta_k}+O_p(m^{-1})
\end{split}\]
The rhs of Eq.~\eqref{eq:lrhs} becomes
\[\begin{split}&rhs=\sum_{i=1}^\n  \frac{\p_i+\htau_i}{\htau_i^2}\frac{\partial \htau_i}{\partial\nuta_j}(\htau_i-\tau^0_i)\\
&=\sum_{i=1}^\n  \frac{\p_i+\htau_i}{\htau_i^2}\frac{\partial \htau_i}{\partial\nuta_j}\left( \sum_k \frac{\partial \tau^0_i}{\partial\nuta_k}(\hnuta_k-\nuta^0_k)+O_p(m^{-1}) \right)\\
&=\sum_k (\hnuta_k-\nuta^0_k)
\sum_{i=1}^\n    \frac{\p_i+\htau_i}{\htau_i^2}\frac{\partial \htau_i}{\partial\nuta_j}\frac{\partial \tau^0_i}{\partial\nuta_k}+O_p(m^{-1})\\
&=\sum_k  (\hnuta_k-\nuta^0_k)
\sum_{i=1}^\n   \left(\frac{\pi^0_i+\tau^0_i}{\tau^0_i}+O_p(m^{-\frac{1}{2}})\right)\left(\frac{1}{\tau^0_i}+O_p(m^{-\frac{1}{2}})\right)\left(\frac{\partial \tau^0_i}{\partial\nuta_j}+O_p(m^{-\frac{1}{2}})\right) \frac{\partial \tau^0_i}{\partial\nuta_k}+O_p(m^{-1})\\
&=\sum_k  (\hnuta_k-\nuta^0_k)
\sum_{i=1}^\n  \frac{\pi^0_i+\tau^0_i}{(\tau^0_i)^2}\frac{\partial \tau^0_i}{\partial\nuta_j} \frac{\partial \tau^0_i}{\partial\nuta_k}
+O_p(m^{-1})\\
\end{split}\]
Hence, equating lhs and  rhs leads to, for $j=1,\cdots, q^*$,
\begin{equation}\label{eq:lr}\begin{split}
&\sum_{i=1}^\n   \left(\frac{\pi^0_i+\tau^0_i}{(\tau^0_i)^2}\right)\frac{\partial \tau^0_i}{\partial\nuta_j}\delta_i+\sum_{i=1}^\n  \frac{2\pi^0_i}{(\tau^0_i)^2}\frac{\partial \tau^0_i}{\partial\nuta_j}(\p_i-\pi^0_i)-\sum_k (\hnuta_k-\nuta^0_k)\sum_{i=1}^\n \frac{2\pi^0_i+\tau^0_i}{(\tau^0_i)^3}\delta_i \frac{\partial \tau^0_i}{\partial\nuta_j} \frac{\partial \tau^0_i}{\partial\nuta_k}+O_p(m^{-1})\\
=&\sum_k  (\hnuta_k-\nuta^0_k)
\sum_{i=1}^\n  \frac{\pi^0_i+\tau^0_i}{(\tau^0_i)^2}\frac{\partial \tau^0_i}{\partial\nuta_j} \frac{\partial \tau^0_i}{\partial\nuta_k}+O_p(m^{-1})\end{split}\end{equation}
Note that, under assumption [A0], all the terms are $O_p(m^{-\frac{1}{2}})$ or smaller except for the first term on the lhs.  Letting $m$ grow to infinity in Eq.~\eqref{eq:lr} implies that \[\sum_{i=1}^\n   \left(\frac{\pi^0_i+\tau^0_i}{(\tau^0_i)^2}\right)(\frac{\partial \tau^0_i}{\partial\nuta_j})\delta_i=0\;\;\text{for all $j$}\,,\]
which in matrix notation becomes 
\begin{equation*}
{B^*}^T \dg\left\{\frac{\pi^0+\tau^0}{{\tau^0}^2} \right\}\delta =0\,,
\end{equation*}
which proves Eq.~\eqref{eq:lemma2} of Lemma~\ref{lemma:htau}.
Plugging this result back into Eq.~\eqref{eq:lr}, we have
\[
0+2{B^*}^T \dg\left\{\frac{\pi^0}{(\tau^0)^2}\right\} \left(\p-\pi^0\right)= {B^*}^T\dg\left\{\frac{(\pi^0+\tau^0)\tau^0+(2\pi^0+\tau^0)(\pi^0-\tau^0)}{(\tau^0)^3}\right\} B^*  (\hnuta-\nuta^0)  +O_p(m^{-1})\,.
\] 
That is
\[2{B^*}^T \dg\left\{\frac{\pi^0}{(\tau^0)^2}\right\} \left(\p-\pi^0\right)= 2{B^*}^T\dg\left\{\frac{(\pi^0)^2}{(\tau^0)^3}\right\} B^*  (\hnuta-\nuta^0)  +O_p(m^{-1})\,.
\]
Hence
\begin{equation}\label{eq:heta}\begin{split} \sqrt{m}(\hnuta-\nuta^0) 
&= ({B^*}^T \dg\left\{\frac{(\pi^0)^2}{(\tau^0)^3}\right\} B^* )^{-1}{B^*}^T\dg\left\{\frac{\pi^0}{(\tau^0)^2}\right\} \sqrt{m}\left(\p-\pi^0\right)  +O_p(m^{-\frac{1}{2}})\\
&=:P^*\sqrt{m}\left(p-\pi^0\right) +O_p(m^{-\frac{1}{2}})\,.
\end{split}\end{equation}
Therefore
\begin{equation}\label{eq:htau}\begin{split} \sqrt{m}(\htau-\tau^0) &= (\frac{\partial \tau^0}{\partial \nuta} +O_p(m^{-\frac{1}{2}}))P^* \sqrt{m}\left(\p-\pi^0\right)+O_p(m^{-\frac{1}{2}})\\
&= B^* P^* \sqrt{m}\left(\p-\pi^0\right)+O_p(m^{-\frac{1}{2}})\\
&=:E^*\sqrt{m}\left(p-\pi^0\right) +O_p(m^{-\frac{1}{2}})\,.
\end{split}\end{equation}
Hence
\[ \sqrt{m}  \begin{pmatrix}
    \p - \pi^0\\[1mm]
    \htau-\tau^0
  \end{pmatrix} 
  = 
   \begin{pmatrix}
    I\\[1mm]
   E^*
  \end{pmatrix} 
   \sqrt{m}\left(\p-\pi^0\right)
  +O_p(m^{-\frac{1}{2}})  \,.\]
  \end{proof}

\end{widetext}

\bibliographystyle{hunsrt}
\bibliography{GausCLs}{}
\end{document}